\numberwithin{table}{section}
\theoremstyle{plain}
\newtheorem{prop}{\protect\propositionname}
\newtheorem{thm}{Theorem}
\newtheorem{lem}{\protect\lemmaname}
\newtheorem{example}{Example}
  \theoremstyle{definition}
  \newtheorem{defn}{\protect\definitionname}
\definecolor{cite-blue}{RGB}{0,0,204}
\date{}
  \providecommand{\definitionname}{Definition}
  \providecommand{\lemmaname}{Lemma}
  \providecommand{\propositionname}{Proposition}
  \providecommand{\remarkname}{Remark}
\providecommand{\corollaryname}{Corollary}
\newcommand{\Song}{Song\textsuperscript{$\mathbb{1}$} }
\newcommand{\Ming}{Ming\textsuperscript{$\mathbb{2}$} }
\newcommand{\Qing}{Qing\textsuperscript{$\mathbb{3}$} }
\begin{document}

\author{In\'{a}cio B\'{o}}
\author{Li Chen}

\address{\textbf{Bó}: Department of Economics, Faculty of Social Sciences, University of Macau, Macau; email:
\href{mailto:inaciobo@um.edu.mo}{inaciobo@um.edu.mo}.}

\address{\textbf{Chen}: Corresponding author. Shanghai College of Intellectual Property Rights, Tongji University, China; Department of Economics, University of Gothenburg, Gothenburg, Sweden; email: \href{mailto:chen\_li@tongji.edu.cn}{chen\_li@tongji.edu.cn}.}

\date{April, 2026}
\thanks{We thank Lars Böerner, Andrea Bréard, Rustamdjan Hakimov, Philipp Heller, Yuan Ju, Dongwoo Lee, Hervé Moulin, Luigi Pascali,  Marek Pycia, Nadja Stroh-Maraun, and Utku Ünver for helpful comments. Special thanks to Rui Magone for assistance in our historical research of the Chinese civil service system. Li Chen acknowledges financial support from the Fundamental Research Funds for the Central Universities (22120250312), the Anniversary Foundation of the School of Business, Economics and Law at University of Gothenburg, and the Tore Browaldh's foundation (BFh18-0007). Inácio Bó acknowledges financial support from the National Natural Science Foundation of China (Grant No. 72573203) and the Asia-Pacific Academy of Economics and Management seed grant.}

\title[Designing Heaven's Will]{Designing Heaven's Will:\\ The job assignment in the Chinese imperial civil service}

\maketitle
\begin{abstract}

We study the evolution of entry-level civil service job assignment in imperial China from the tenth to the early twentieth century. The procedures were sequential, irrevocable, and publicly verifiable, relying increasingly on lots-drawing and operating under constraints that limited which posts candidates could fill and where they could be posted. Using a unified matching framework, we compare procedures by their ability to maximize the number of filled posts and to assign higher-ranked jobs to higher-degree candidates. We show that reforms intended to improve these outcomes can instead reverse them once constraints interact, a failure driven by greedy sequential matching. Building on the structure of the final historical procedure, we describe minimal extensions of public lots-drawing mechanisms that guarantee maximum matchings under compatibility constraints.

\end{abstract}
Keywords: Civil service; Matching; Market design; History of civil service

JEL: C78; D47; D73; J45 

\clearpage{}
\begin{flushright}
\par\end{flushright}

\section{Introduction}

In this paper, we draw on historical documents and studies to present a formal description of one of the earliest and longest-used assignment schemes for allocating candidates to government jobs, in use from the mid-tenth century to the early twentieth century in China. We then develop a unified theoretical framework that clarifies the trade-offs underlying the successive changes to this system.

We identify two substantive criteria that constrained the operation of the Chinese civil service assignment procedures. The first is the \textit{eligibility} criterion, which restricted the set of government jobs to which a candidate could be assigned as a function of competence, as measured by examination degrees. The second is the \textit{rule of avoidance}, which prohibited candidates from being assigned to posts located in their home regions.

Assignments were carried out sequentially and irrevocably: candidates were considered one by one and matched to an available compatible job at their turn. This greedy structure made the procedure simple to implement and easy to monitor publicly, but also constrained the scope for ex post correction. From the late sixteenth century onward, transparency concerns were increasingly addressed by replacing discretionary job choice within this sequential process with publicly observable lots-drawing. Randomness was thus introduced not as a stand-alone objective, but as a means of ensuring that greedy assignments could be carried out in a transparent and verifiable way, while limiting personal influence and factional interference.

The following quote, from the court discussions between Emperor Chongzhen and his ministers in 1628, highlights the way in which the choice for resorting to randomness was framed as the delegation of these decisions to \emph{Heaven's Will}:
\smallskip 

\begin{quote}
``\emph{Finding the right men for the grand secretariat benefits
greatly the empire. I do not dare to make the decision myself, therefore, I ask for Heaven's will}.'' --- \cite{sun1777chunming}\footnote{Grand secretaries were cabinet members of the central administration. Lots-drawing was used mainly for assigning entry-level jobs which we will describe in detail later. It was however occasionally used for higher-level jobs.}
\end{quote}
\medskip 

The first formal assignment procedure appeared in the mid-tenth century, during the \Song dynasty (960–1279).\footnote{To assist readers in following the chronological order of the dynasties discussed in the paper, we add a superscript number next to each dynasty name: \Song precedes \Ming, which in turn precedes \Qing.} Candidates were assigned to jobs sequentially and publicly, in an order determined by degree and examination rank, with individual preferences guiding job choice. This sequential structure remained a defining feature of all subsequent procedures.

Major reforms followed during the \Ming dynasty (1368–1644).\footnote{The Yuan dynasty (1279–1368) interrupted the institutional continuity of the civil service system; although examinations were partially reinstated in 1315, systematic assignment procedures of the type studied here re-emerged only under the \Ming dynasty.}  From the late fourteenth century, the First \Ming procedure replaced candidate preferences with bureaucratic assignment, motivated by concerns that higher-degree candidates were being allocated to lower-level posts. In the late sixteenth century, the Second \Ming procedure introduced publicly observable lots-drawing, reducing discretionary interference but reintroducing mismatches between candidates and jobs.

Lots-drawing persisted into the \Qing dynasty (1644–1912), where reforms focused on improving its performance. In the mid-seventeenth century, the First \Qing procedure partitioned assignments by job type to better align qualifications and posts, and in the early nineteenth century the Second \Qing procedure further refined this system by prioritizing candidates constrained by the rule of avoidance, reducing the incidence of unfilled jobs.

To understand the trade-offs behind these changes, we introduce a formal model that captures the key elements of the different assignment procedures used in the period of time we considered. 
We focus on two measures when evaluating these procedures, \textit{minimizing the number of unfilled jobs}, and \textit{prioritizing high-levels}---favoring candidates with high-level degrees for high-level jobs. These two objectives were explicitly or implicitly mentioned in the historical context, which ensured that jobs were assigned, and to the candidates with the right competence. 
Both objectives were salient in the historical context and reflected concrete administrative concerns. Although the number of qualified candidates exceeded available positions throughout the \Song and later dynasties,\footnote{During the twelfth century, the ratio of candidates to vacancies was about 3 to 1 \citep{deng2004politics}.} eligibility and avoidance constraints could still leave posts unfilled. Unassigned candidates had to wait until the next appointment cycle—typically three months later—while positions remained vacant. Minimizing unfilled jobs was therefore a first-order concern, particularly when expanding the number of posts was difficult and ad hoc corrections would have undermined the transparency and rule-based nature of the assignment system.

The theoretical framework allows us to compare the \Song, First, Second \Ming, First and Second \Qing procedures, despite the first involving preferences, the second arbitrary choices by bureaucrats, and the last three random draws. Our framework exploits the sequential nature of how all of the procedures are used in determining the final allocation, while being agnostic about the source of these orders. This can result in particularly robust conclusions, where comparisons of outcomes between random and deterministic procedures hold not in expectation, but in every realization of chance. As a result, it also allows us to isolate the effects of individual constraints, such as eligibility and rule of avoidance. 

When compatibility is defined solely by eligibility, the historical sequence of reforms largely moved in the intended direction. The transition from the \Song to the First \Ming procedure reduced the incidence of unfilled jobs and improved the assignment of high-level candidates to high-level posts. Although the introduction of lots-drawing in the Second \Ming temporarily weakened these properties, subsequent reforms under the \Qing dynasty restored them: both the First and Second \Qing procedures perform better along these dimensions while retaining public randomization (see Propositions~\ref{prop:CardinalEfficiencyAllProceduresCMinus} and~\ref{prop:AssortativenessAllProceduresCMinus}).

Once the rule of avoidance is taken into account in addition to eligibility, however, this pattern no longer holds. The same reforms—from the \Song to the First \Ming and from the Second \Ming to the First \Qing—can lead to more unfilled jobs and weaker prioritization of high-levels. These reversals arise from nontrivial interactions between eligibility and avoidance constraints, and show that reforms motivated by qualification-based considerations alone can produce worse outcomes once regional restrictions are respected (Theorem~\ref{prop:FromSongToMingOneLessAssortative}).

The final reform, from the First to the Second \Qing procedure, is qualitatively different. By giving priority to candidates constrained by avoidance before matching others, it weakly reduces the number of unfilled jobs in all cases and strictly reduces it in some, without undermining the assignment of high-level candidates to high-level posts (Theorem~\ref{prop:FromQingOneToQingTwo}). Historical evidence indicates lower average waiting times for appointment over this period, a pattern compatible with the increase in match cardinality implied by the reform.

Finally, we move from historical evaluation to mechanism design. Building on the structure of the Second \Qing procedure, we study how sequential lots-drawing mechanisms can be modified to improve match cardinality under compatibility constraints. We show that, for arbitrary compatibility structures, guaranteeing maximum matchings through sequential drawing may require an arbitrarily large number of job tubes, so that no fixed, simple tube architecture can be universally sufficient (Proposition~\ref{prop:unboundedNumberOfTubes}).

We then identify a sharp positive result for a natural and historically relevant class of environments. When compatibility is driven by the rule of avoidance and the market satisfies a mild regularity condition, a simple two-by-two structure of worker and job tubes guarantees a maximum matching for every realization of chance (Theorem~\ref{thm:TwotubesOnEachSide}). This construction preserves transparency and public verifiability, while extending the historical lots-drawing procedure in a minimal way. Indeed, we show that this two-by-two structure is minimal: no procedure with fewer worker or job tubes can ensure maximality even in regular environments (Proposition~\ref{prop:twotwominimal_corrected_final_v4}).

Finally, we describe the limits of transparent random design once additional objectives are imposed. When eligibility constraints and the prioritization of high-level posts are considered alongside avoidance, no anonymous lots-drawing procedure can simultaneously guarantee maximum matchings and prioritization (Proposition~\ref{prop:incompatibilityMaximizingAndPrioritizing}). Together, these results delineate the scope and limits of simple, transparent random assignment mechanisms that follow the structure introduced by our historical analysis.

Beyond its historical motivation, however, our analysis speaks to contemporary applications in which public randomization is valued, such as allocating refugees to host communities with capacity constraints \citep{andersson2017assigning}, assigning judges with heterogeneous expertise to cases \citep{thorley2020randomness}, or allocating public housing with size constraints \citep{arnosti2020design}. Lots-drawing has also been proposed as a governance tool, for example in distributing positions within the European Commission among member states \citep{buchstein2009randomizing,berger2020prevent}.

\subsection{Related Literature}

Given its long history and prominent role in government and society, the workings of the Chinese civil service have been the subject of an extensive bibliography, especially among historians. In the context of the Chinese civil service, \citet{will2002creation} is the first historical study that reviews the origin and evolution of the lots-drawing procedure in imperial China since the late sixteenth century. \citet{watt1972district} offers a panoramic view of the career path for county magistrates---from first-time appointment, promotion, and re-appointment to demotion---during the period between the late eighteenth and the early nineteenth century. Similarly, \cite{gong1997dictionary}, \cite{pan2005ming} and \cite{zhang2010qing} review the civil service systems for \Song, \Ming and \Qing respectively. 
These historical studies mainly analyze the societal and political background of the appointment systems, but have not evaluated the properties of the appointment procedures.  
A few recent economics papers evaluate empirically how the appointment methods, and in particular patronage---by means of discretionary appointments through connections, contrary to our focus of rule-based assignment---affect public service, and find that patronage generally leads to worse economic performance or the selection of less competent officials \citep{xu2017costs,colonnelli2020patronage}. \cite{thakur2020indian} studies a formal assignment procedure currently used in the Indian Administrative Service, focusing its distributional issues and the effect on bureaucratic and economic performance.    

By formalizing the civil service assignment as a matching problem, our paper connects to the literature in matching and market design, which has a long tradition of applying formal economic theory to study properties of allocation mechanisms used in real-life applications \citep{roth1984evolution,sonmez2013bidding,sonmez2013matching,dimakopoulos2019matching,abdulkadirouglu2003school,roth2004kidney}. Some of the issues that we identify regarding the way in which the matchings are produced sequentially are indirectly related to, for example, the topic of reserve design. \cite{dur2018reserve} show how the processing order of reserves in school choice design matters and how some processing orders can cause unintended consequences. Other papers include \cite{sonmez2021affirmative} on affirmative action policies in Indian civil service, \cite{pathak2020immigration} on H-1B visa allocation rules, and \cite{pathak2020fair} on medical resources allocations. In addition, the search for procedures that guarantee the minimization of unfilled jobs relates to maximum matchings in random environments. \cite{bogomolnaia2004random} look at maximum matchings when randomizing over dichotomous preference. \cite{boczon2018goals} study the UEFA Champions League group drawing method, a problem in which matchings are also determined by drawing lots and where there is a concern about maximality. 

More broadly speaking, our paper also relates to a burgeoning literature that applies economic theory to analyze important economic and societal institutions from a long-run perspective. \cite{greif1993contract} is perhaps the first paper in the field, applying contract theory to analyze contractual relations in eleventh-century Mediterranean trade. Other studies include  \cite{greif1994coordination} on the contract enforcement problem faced by merchants in late medieval Europe; \cite{borner2017design} on the debt-clearing financial mechanisms in preindustrial Europe; \cite{mackenzie2019axiomatic} on the succession rules of Popes; and \cite{boerner2022medieval} on the medieval brokerage mechanisms in wholesale markets.  
Our paper follows this approach in that we use the toolbox from matching theory to analyze the civil service assignment procedures---another important historical institution---in order to shed light on its properties. 
 
The remainder of the paper is organized as follows. Section \ref{sec:Background} provides the historical and institutional background on the selection and appointment of civil servants, including the examination system, job structure, and the key constraints—eligibility, avoidance, and public randomization—that shaped assignment practices, and it reconstructs the assignment procedures used from the \Song to the \Qing dynasties. Section \ref{sec:theoreticalframework} develops a unified theoretical framework that formalizes these procedures within a common matching model. Section \ref{sec:AnalyzingTheProcedures} analyzes the outcomes generated by the historical procedures, establishing comparative results on match cardinality and the prioritization of high-level posts, and highlighting how reforms can interact non-trivially with constraints. Section \ref{sec:NewMechanisms} moves from historical evaluation to mechanism design, discussing the limits of sequential lots-drawing and introducing transparent mechanisms that guarantee maximum matchings under avoidance constraints. Section \ref{sec:conclusion} concludes and discusses broader implications and directions for future theoretical and empirical work.\footnote{Additional historical background, source documentation, and supplementary material are provided in an online appendix.}

\section{Selection and appointment of civil servants:\\A historical background \label{sec:Background}}

\subsection{Institutional setting and examination-based qualification}

Evidence of state bureaucracy in China can be traced back to as early as the third century BC \citep{creel1964beginnings}. A large body of bureaucrats was needed to implement various tasks, ranging from tax collection and juridical investigation to miscellaneous administrative chores. 
The way in which these bureaucrats were selected has evolved over time, starting from a hereditary system to a recommendation-based system in 134 AD, and finally to an examination-based system in 589 AD, known as the \textit{Civil Service Exams}. It lasted until 1905,\footnote{The exam system was interrupted during the Yuan dynasty (1279--1368), a Mongol-led dynasty. It was re-established in 1315, not long before the fall of the Yuan dynasty.}  and was abolished following a series of modernization reforms shortly before the fall of Imperial China.
The Civil Service Exams allowed any man to register for the exams, without recommendation or patronage by incumbent officials, and registered candidates were selected through a series of standardized exams that primarily focused on Confucian classics.
It was hoped that these exams would ensure impartial evaluation and select candidates by merit rather than birth or class.
Selection by exams was initially used on a small scale for the civil service,\footnote{Besides examination, one could qualify for the civil service through other channels. This includes clerks who were hired by local governments on term-limited contracts, people who were qualified in recognition of distinguished services by their fathers, and those who were qualified by paying a tribute to the government. 
These candidates needed to pass separate exams before being considered for appointments. }
but started to gain importance in the \Song dynasty (960--1279). 
The \Song administration gradually established a career civil service system, where  
candidates were first selected by merit, and once qualified they could follow a career ladder for promotion and remain on the payroll of the central government until retirement, even if no specific tasks or posts were available \citep{gong1997dictionary,deng2004politics}.
In addition, rules were introduced to safeguard impartiality in exams, which remained in the following dynasties. 
For instance, names of candidates were obfuscated, and their exam answers were transcribed before evaluation, 
all to make sure that the candidates were not able to be identified directly or indirectly through handwriting \citep{elman2000cultural}. Selection by exams gained significant importance in the \Ming dynasty (1368--1644) and the succeeding \Qing dynasty (1644--1912), and was regarded as the ``regular'' path to qualify for civil service, in contrast to other paths of qualifications.\footnote{These candidates were intended for jobs overseen by the central government, both within and outside the metropolitan area. Local governments were led by people appointed by the central government, along with  
a significant number of subordinate officials who qualified through alternative channels, such as purchased degree holders \citep{Wu2013Juanna}. 
}  
In the remainder of the paper, we focus on the appointments of candidates who obtained their qualifications through examination. 

\subsection{Degrees, appointment cycles, and entry-level posts}

The Civil Service Exams included multiple levels of qualification exams. After passing the entry-level exam, held locally and twice every three years, candidates were given the degree of \emph{licentiate} (``\emph{xiu cai}''). Licentiates could then proceed to the next levels of triennial exams which offered opportunity for centrally appointed government jobs.  
The second level was a provincial-level examination in the provincial capital, and candidates who passed these were awarded the degree of \emph{recommended men} (``\emph{ju ren}''). Recommended men could proceed to the third-level exam held in the capital. Successful candidates were awarded the provisory title of \emph{tribute scholars} (``\emph{gong shi}''). They were re-examined and ranked in the imperial palace, under the supervision of ministers and often the emperor himself. These candidates were eventually awarded the degree of \emph{advanced scholars} (``\emph{jin shi}''). 
During a typical year of examinations in \Ming and \Qing dynasties, more than 1,000 candidates received the degree of recommended men, while about 100 went on to obtain the degree of advanced scholar. 

Recommended men, tribute and advanced scholars were then appointed by the Ministry of Personnel to various important government jobs located in or outside the capital. 
These jobs were entry-level civil service jobs, while higher-level jobs were reserved for the promotion of more experienced officers. Once assigned, the new officials would serve three years in their positions, and after that, depending on their evaluations, they could be promoted, transferred, or demoted. 
Typical entry-level jobs in the capital included editors at Hanlin Academy---an elite scholastic institute whose main task was to interpret the Confucian classics---secretaries in various ministries, officers at various departments and courts, etc. 
Typical jobs outside the capital included magistrates and judges in prefectures, sub-prefectures, and counties.\footnote{From the \Song to the \Qing dynasty, the central administrative divisions were such that outside the capital city, province was the primary division. Under the level of province were prefectures, followed by sub-prefectures. The smallest administrative unit was county. Each level of the administrative units was typically headed by a magistrate, who was assisted by a judge in charge of investigating civil and criminal cases.} Among all jobs, the most common ones were county magistrates, who were responsible for the overall management of a county, including mostly but not exclusively tax collection, law enforcement, school inspection, and disaster relief. There were roughly 1,300 to 1,500 counties during our periods of interest, and most of the open posts were reserved for the newly selected candidates, with some remaining for transfer or re-appointment posts. For instance, about two-thirds of the posts were for newly selected candidates during the early \Qing period \citep{will2002creation}. 

\subsection{Constraints and desiderata}

The overall stated goal of the appointment was to ``find the right person for the right job.'' There are of course different interpretations on what should be considered an appropriate match. We observe three criteria that emerged over time, and were respected by the appointment system, though to different extents. While the first two criteria were present in all procedures, the third one appeared as of the late sixteenth century.  

\subsubsection*{Eligibility.}

Qualified candidates, depending on their degree, were eligible for different sets of civil service jobs. In general, candidates with higher degrees were appointed to  jobs with higher ranks. That is, advanced scholars were more likely to be assigned to more important jobs than tribute scholars or recommended men. Early debates often centered around whether academic achievement should be the sole consideration for measuring one's competence. However, over time,  academic achievement was gradually accepted as the main desideratum, as it provided clear incentives to follow the career system and gave legitimacy to the examination system. 
Since the precise eligibility requirement varied with time, we will elaborate on the details in Section \ref{sec:AssignmentsThroughHistory} when discussing the assignment procedures. 

\subsubsection*{Rule of Avoidance.}
This additional criterion prevented a candidate from being assigned to a particular job even if he was eligible. The rule of avoidance, in fact much older than the civil service appointment, dates back to the second century. Despite the various forms it took over time, avoidance of localities is the most fundamental one for the entry-level jobs \citep{Wei1992}.
It stated that a candidate was prohibited from being appointed to his native province.\footnote{The other types of avoidance can overlap with avoidance of localities or involve smaller sets of people. For instance, the avoidance of family, which prevented a candidate from being assigned to a job where an incumbent official were direct or indirect family members. Since most people’s families came from the same region, avoidance of localities was often sufficient. The other type was avoidance of teacher and student, which prevented a candidate from serving in a job where he was the student or the teacher of an incumbent officer.} 
Bureaucrats originating from a particular region naturally had an information advantage about their home regions, and therefore assigning them there could be beneficial to the local management. However, as they were also more connected to local elites, this also posed a threat of local capture, especially when regions were more isolated due to communication and transportation constraints. 
Rule of avoidance was thus intended to prevent the formation of local powers, which was regarded as a major impediment and challenge to unification throughout the Chinese history. 

\subsubsection*{Randomness and public verifiability.}
Starting from the late \Ming dynasty, candidates were assigned to compatible posts through publicly observable lots-drawing, in which job titles were placed into tubes and drawn at random. Though it might seem odd to randomly assign civil servants to posts, it is a way by which the emperor could essentially mitigate the ability of local clans extending their power by influencing these assignments, without taking on the decisions himself. On the other hand, the lack of criteria when deciding individual assignments reduces the ability of matching highly qualified candidates to higher-level jobs. It seems, however, that randomness was gradually accepted as a desirable property from the late sixteenth century onward.\footnote{Random assignments have a surprising history in the political sphere. As early as the democratic period of ancient Athens, random assignments were used to select citizens to serve in the Boule (a council appointed to run the daily affairs of the city) and various state offices, through a randomization device known as Kleroterion \citep{headlam1891election}. In recent years, political scientists have also advocated for a lot-drawing procedure for assigning EU commissioners \citep{buchstein2009randomizing}.}

\subsection{Sources, scope, and reconstruction of procedures}

We now turn from general institutional background to a detailed description of the concrete assignment procedures used from the tenth to the early twentieth century. We focus on the methods that were used from the tenth to the early twentieth century to determine, for each of the candidates selected by exams, which jobs they were assigned to. In order to obtain an overview on how the assignment procedures worked, we combine both official documents as well as secondary sources. Official documents for dynasties before the tenth century do not inform us about the details of the appointment procedure, perhaps due to its small scale, and assignments were more ad-hoc. The earliest official regulations describing the assignment procedures in detail are from the \Song dynasty.\footnote{Note that official documents were often compiled in the succeeding dynasty by the order of the new emperors. Writing ``official dynastic history'' of its predecessor is a convention established by the state historian Sima Qian from Han dynasty (202BC–-220), which, even though it might be biased, provides invaluable sources to study institutions in earlier times. Parts of the original official documents from the \Ming dynasty and the majority of original official documents from the \Qing dynasty are preserved today.} The treatises on selection and appointment from the History of Song \citep{Song1343}---the official dynastic history---describe the types of civil service jobs that candidates were assigned to, which draws a blueprint for the dynasties to follow, though details are often vague. Rules of the Ministry of Personnel from Yongle Encyclopedia \citep{Yongle} offer supplementary materials for understanding how the assignment procedure was actually carried out during this period. For the following \Ming dynasty, in addition to treatises on selection and appointment from the History of Ming \citep{Mingshi}, Collected Statutes of Ming \citep{Minghuidian}---the official code of administration from the \Ming dynasty documenting regulations for six ministries, including the Ministry of Personnel, and Regulations of Ministry of Personnel \citep{zhizhang}---both provide additional descriptions of the types of jobs candidates were eligible for and the assignment procedure. Lastly, with the maturing of the assignment system, the \Qing dynasty produced the most detailed official sources on the assignment procedure, including how it was carried out and how eligibility and rule of avoidance were respected in the procedure, as well as related policy changes (see Collected Statutes of Qing \citep{daqingGuangxu} and Regulations of Ministry of Personnel \citep{zeli1886}). Our secondary sources include court debates, correspondences between emperors and ministers, and handbooks preparing students for the career system \citep{huang1997fuhui}. These sources allow us to validate implementation details mentioned in the official documents. 

\subsection{Assignment procedures through history}
\label{sec:AssignmentsThroughHistory}

\subsubsection{The \Song procedure (960–1279)}
\label{subsec:SongProcedure}

With the expansion of recruitment through civil service examinations, a formal assignment procedure emerged during the \Song dynasty. Successful candidates were eligible for entry-level civil service positions, including clerical posts in the capital and county magistracies in the provinces, although eligibility rules were not sharply delineated by degree category.

Appointments were conducted quarterly through a priority-based, sequential procedure. Candidates were ordered primarily by degree—advanced scholars first—and then by examination results within each degree category.\footnote{In addition to exam performance, endorsements by incumbent officials could affect priority during the \Song dynasty; this hybrid merit–patronage feature was abandoned in later procedures.} All vacancies were publicly announced in advance. Following the priority order, candidates were called one by one and indicated a preferred available job, which officials would either approve or reject. Unassigned candidates and jobs were reconsidered in an additional round, with any remaining unmatched positions carried over to the next appointment session three months later.

\subsubsection{The First \Ming procedure (1368–1594)}
\label{subsec:MingOneProcedure}

The First \Ming procedure retained a priority-based, sequential assignment, but eliminated candidates’ job choice. Instead, the Ministry of Personnel evaluated candidates and assigned them to suitable positions. Appointments were organized monthly—first-time appointments in even months and promotions or transfers in odd months—a structure that persisted in later dynasties.

Candidates were ordered by degree and examination results: advanced scholars first, followed by tribute scholars and recommended men. Using this priority order, the ministry assigned candidates to eligible jobs while respecting the rule of avoidance. Assignments were intended to match candidates’ skills as measured by exam performance, so higher-ranked candidates were more likely to receive higher-ranked posts.

A key innovation of the First \Ming procedure was the formalization of eligibility rules. As summarized in Table~\ref{tab:ming}, advanced scholars were divided into three grades with distinct eligibility sets, while tribute scholars and recommended men were eligible only for non-metropolitan positions. Some jobs were reserved exclusively for advanced scholars, others only for lower-degree candidates, and some could be filled by candidates from any degree category. These rules sharply constrained the feasible assignments relative to the \Song procedure.

While the procedure could in principle accommodate both eligibility and avoidance constraints, it left substantial discretion to officials. By the late sixteenth century, this discretion was widely perceived to enable factional and clan-based interference, undermining meritocratic assignment and motivating subsequent institutional reforms \citep{will2002creation}. Unlike the other procedures we study, the First \Ming procedure does not fully specify how the ministry selects among eligible jobs; in our theoretical analysis, we therefore adopt the optimistic assumption that assignments favor better skill–job matches.

\begin{table}[htp]
    \centering
    \small
    \caption{Jobs eligible to candidates by degree in the Ming dynasty}
    \label{tab:ming}
    \begin{tabular}{p{3.5cm}p{3cm}p{4cm}p{4cm}}
        \toprule
        & & \multicolumn{2}{c}{Jobs} \\
        \cmidrule{3-4}
        & & Metropolitan & Non-metropolitan \\
        \midrule
        \multirow{3}{*}{\parbox{3.5cm}{Advanced scholars}} & First grade & Senior and junior editors at Hanlin Academy & \\
        \cmidrule(l){2-4}
        & Second grade & Ministry secretaries & Sub-prefecture magistrates \\
        \cmidrule(l){2-4}
        & Third grade & Secretaries to the emperor, officers at Department of Foreign Affairs, officers at Court of Judicature Review, officers at Court of Ceremonials & Prefecture judges, and county magistrates \\
        \midrule
        \parbox[t]{3.5cm}{Tribute scholars and\\recommended men} & & & Prefecture judges, county magistrates, and study officers \\
        \bottomrule
    \end{tabular}
\end{table}

\subsubsection{The Second \Ming procedure (1594–1644)}
\label{subsec:MingTwoProcedure}

In 1594, the \Ming administration introduced a lots-drawing procedure, marking the first systematic use of lotteries to assign newly selected candidates to entry-level civil service jobs. Eligibility information was publicly announced in advance. Job titles were written on sticks and placed into tubes, from which candidates drew their assignments.

The procedure was sequential and priority-based. Advanced scholars drew first, followed by tribute scholars, and then recommended men, with candidates ordered within each group by examination results.\footnote{While official documents are vague about the drawing order, secondary sources indicate that jobs remaining after advanced scholars were drawn were carried over to tribute scholars and recommended men \citep{Wu1609}.} Each candidate drew from a tube containing only jobs for which he was eligible. Although the sources are imprecise about implementation, both context and later \Qing descriptions suggest that draws violating the rule of avoidance were discarded and redrawn until a compatible job was found.

Relative to the First \Ming procedure, lots-drawing substantially reduced discretion and personal interference by introducing randomness into the assignment process. This was explicitly intended to limit factional influence and regional favoritism, drawing on an established cultural practice already used in other administrative contexts \citep{will2002creation}. At the same time, the procedure was criticized for its lack of control over matching quality: high-priority candidates could be randomly assigned to mid-ranked jobs while higher-ranked positions remained vacant, and reports of manipulation of tube contents indicate that corruption quickly re-emerged \citep{Guyanwu1670,shen1619wanli}.

\subsubsection{The First \Qing procedure (1644–1824)}
\label{subsec:QingOneProcedure}

The lots-drawing procedure was retained after the dynastic transition, as random assignment was viewed as a tool to limit regional power and reinforce central control under the new Manchu-led administration. Eligibility rules were slightly revised, and additional safeguards were introduced to enhance impartiality: appointments were conducted publicly under the supervision of the Censorate, candidate and job names were sealed before drawing, and drawing order was itself randomized rather than determined by exam rank.

The main institutional innovation was the introduction of a \emph{partitioned assignment} system. Vacancies were first grouped by job type (e.g., ministry secretaries, county magistrates). For each job type, the Ministry of Personnel constructed a corresponding list of eligible candidates, selecting candidates by exam rank up to the number of vacancies. When multiple degree categories were eligible for a given job type, a quota system determined the composition of candidates placed into the tube.\footnote{This quota system, known as the ``monthly class'' (\textit{yueban xuance}), functioned analogously to a reserve system: it restricted eligibility, while the final assignment within each job type was still determined by lots.}

Assignments were then carried out independently for each job type. For a given type, a supervisor sequentially drew a candidate and a job from their respective tubes. Matches respecting the rule of avoidance were immediately announced; otherwise, jobs violating avoidance were set aside and redrawn until a compatible match was found, after which excluded jobs were returned to the tube. The process continued until all candidates were matched or no compatible pairs remained. Unassigned candidates and jobs were carried over to the next appointment cycle.

By separating high-ranked jobs into distinct assignment markets, the First \Qing procedure reduced the likelihood that high-priority candidates would be assigned to lower-ranked jobs while higher-ranked positions remained available, a key criticism of the Second \Ming procedure. At the same time, partitioning narrowed the set of jobs available to each candidate, increasing the risk that avoidance constraints would leave some jobs unfilled. This risk was further amplified by the rule that candidates unmatched within a job type could not be considered for other types in the same round, a restriction intended to preserve rank-based assignment across appointment cycles.

\subsubsection{The Second \Qing procedure (1824–1905)}
\label{subsec:QingTwoProcedure}

In 1824, the \Qing administration modified the partitioned lots-drawing procedure to address the increased incidence of unfilled jobs generated by partitioned assignments. The key change was to prioritize, within each job type, candidates facing incompatibilities due to the rule of avoidance: candidates who needed to avoid some jobs in the tube were required to draw first, and only after they completed their draws could the remaining candidates proceed. All other aspects of the procedure remained unchanged. This reform is explicitly recorded in the Collected Statutes of \Qing \citep{daqingGuangxu}:
\medskip

\begin{quote}
``\emph{1824, it was approved after discussions, for the people who draw lots in the monthly appointment, those who have home provinces to avoid draw first. If they still draw a job that needs to be avoided, remove this job and ask {[}the candidates{]} to draw another job. Until a {[}compatible{]} lot is drawn, let those who do not need to avoid home provinces draw.}''
\end{quote}
\medskip

By prioritizing candidates constrained by avoidance, the reform increased their likelihood of finding a compatible job and weakly improved match cardinality. Since assignments remained partitioned by job type, avoidance was the sole source of incompatibility within each assignment market. While we lack direct data on match cardinality, archival evidence suggests a reduction in waiting times following the reform: \citet{Wang2016} reports that the average time for advanced scholars to become county magistrates fell from about eight years (1796–1820) to 5.5 years (1821–1850).

Overall, the historical evolution reflects a gradual shift toward more systematic and transparent assignment rules. Candidate preferences, present in the \Song procedure, were abandoned as incompatible with rank-based priorities. From the late sixteenth century onward, lots-drawing became routinized, with successive refinements aimed at balancing eligibility, avoidance, randomness, and match cardinality. Across all procedures, assignments were conducted publicly and sequentially, with matches announced as soon as they were determined—ensuring transparency and feasibility given the technological constraints of the period.

\section{Assignment procedures: A theoretical framework}\label{sec:theoreticalframework}
Analyzing the functioning of the assignment procedures used in several historical episodes is, of course, challenging, as the documentation becomes sparser further back in time. Our analyses, therefore, are the result of our best effort to derive coherent procedures that were consistent with the descriptions, related discussions, and the descriptions of the procedures that replaced them. It should surprise no one that there are, also, recorded instances of ad-hoc adjustments on how they were implemented.\footnote{For instance, it was documented that in the \Qing dynasty, candidates who were initially appointed through the standard procedure had to present themselves in the court and in front of the emperor for audition \citep{zeli1886}. Based on the audition and the difficulty level of the job one was assigned to, the initial assignment could be changed.} We abstract away from such  adjustments and construct a simple model to compare the main assignment procedures across time. While motivated by the civil service assignment, our theoretical framework follows the terminology used in matching literature. The terms \textit{worker} and \textit{candidate} are used interchangeably. 

\subsection{Model}

A set of \textbf{workers} $W$ is to be matched to a set of \textbf{jobs} $J$. A \textbf{matching} is a function $\mu:W\cup J\to W \cup J \cup \{\emptyset\}$ such that each worker (job) is assigned either to one job (worker) or is left unassigned, and a worker is matched to a job if and only if the job is also matched to him, that is, for any $w\in W$ and $j\in J$,  $\mu(w)=j\iff \mu(j)=w$. To account for the restrictions on what matches are acceptable, let $\mathcal{C}$ be a \textbf{compatibility} correspondence $\mathcal{C}:W\twoheadrightarrow J$ that defines which jobs are compatible with each worker. In our application, compatibility could be determined by eligibility or the rule of avoidance.  
In addition, we say that a matching $\mu$ is \textbf{feasible} if workers only receive jobs that they are compatible with, that is, for any $w\in W$ and $j \in J$, $\mu(w)=j\implies j\in\mathcal{C}(w)$. The set of all feasible matchings is denoted by $\mathcal{M}$.
We refer to a triple $\langle W,J,\mathcal{C}\rangle$ as a \textbf{market}.

For any matching $\mu \in \mathcal{M}$ and any subset of jobs $J'\subseteq J$, let $\mu(J')$ be the set of workers matched to some job in $J'$ under matching $\mu$. That is, $\mu(J') \equiv \left\{w\in W: \mu(w)\in J'\right\}$. 
Denote by $|\mu|$ the cardinality or the size of $\mu$, that is, $|\mu|=|\mu(J)|$. 
\begin{defn}
A matching $\mu$ \textbf{minimizes unfilled jobs} if for any $\mu' \in \mathcal{M}$, $|\mu|\geq |\mu'|$. 
\end{defn}

Our framework for evaluating the mechanisms is based on two elements: \textit{assignment plan} and \textit{assignment arrangement}. They are, loosely speaking, respectively analogous to an \textit{economy} and a \textit{mechanism} in a typical market design model. In a Chinese civil service assignment problem, elements of an economy can be the preferences of workers (in the \Song procedure), the realization of chance in the draws from the tube of jobs and workers (in the Second \Ming and both \Qing procedures) or considerations made by the Ministry of Personnel (in the First \Ming procedure). On the other hand, the different procedures share a ``sequential consideration'' aspect that can be captured by a simple and also more general structure, which we denote by assignment arrangements. They define sequences of subsets of workers and jobs to be considered for matching in a way that combines with assignment plans to result in a specific matching. One of the key advantages of this framework is that it allows us to perform comparative statics fixing a mechanism while varying the economies, but also fixing an economy and varying the mechanism. 

Formally, an \textbf{assignment plan} is a pair $\left\langle \succ^W,\left(\succ_w^J\right)_{w\in W}\right\rangle$, where $\succ^W$ is a strict total order over the set of workers, and each element of $\left(\succ_w^J\right)_{w\in W}$ is a strict total order over the set of jobs. The first element of the assignment plan, the order $\succ^W$, tells for each pair of workers $w$,$w'$, who will be considered for a matching first. In the \Song, First, and Second \Ming procedures, for instance, an assignment plan in which $w\succ^W w'$ represents the situation in which worker $w$ obtained a higher exam grade than $w'$, and therefore if both obtained the same title (for example, advanced scholar), both procedures will match $w$ to a job (if any) before $w'$.\footnote{Note, however, that $w\succ^W w'$ does not say that $w'$ will be matched immediately after $w$, but that $w'$ will not be considered for a match before $w$.} In the \Qing procedures, $\succ^W$ indicates the order in which workers in the same tube are drawn. So if, for example, a tube contains workers $w,w',w''$, then $w\succ^W w'\succ^W w''$ represents the realization of chance in which these workers are drawn from that tube in that order. All possible orderings in which workers can be drawn from that tube are represented by all the permutations of $\succ^W$. Importantly, a uniform distribution over all of these permutations represents the distribution in which these orders take place in the real-life procedure.

The second element of the assignment plan, the orders over jobs $\left(\succ_w^J\right)_{w\in W}$, represents the order in which jobs are considered, for each worker. Given a worker $w$ and two jobs $j$ and $j'$, $j\succ_w^J j'$ says that, when considering a match for worker $w$, if both jobs are being considered and the worker is matched to $j'$, it must be that $j$ is not available anymore or is incompatible with $w$. In the \Song procedure,  $\succ_w^J$ represents worker $w$'s preference over the jobs: if both $j$ and $j'$ are still available when his turn comes, and $j$ is compatible with him, he will choose $j$ instead of $j'$. In the First \Ming procedure, it represents whatever criterion was used by officials to choose among eligible jobs. Finally, in the Second \Ming and the \Qing procedures, orders over jobs also represent a realization of chance of the order of drawing of jobs from the corresponding tube. So if worker $w$ is such that $j\succ_w^J j' \succ_w^J j''$, and these three jobs are in the tube from which he will draw, he will first draw $j$. If $j$ is an eligible job that does not violate the rule of avoidance, he will be matched to that job. If, say, $j$ is incompatible with $w$, then he will next draw $j'$, and so on. As in the order over workers, drawing uniformly from the possible permutations over a set of jobs results in the same distribution of outcomes that result from the real-life procedure when $w$ is facing a tube with these contents. Note that, as opposed to the order over workers, which is unique, here each worker might be associated with a different order over jobs. This represents the fact that the drawings of jobs from a tube are independent draws---conditional, of course, on the contents of the tube. 

The other key concept when modeling the procedures is an \textbf{assignment arrangement}. An assignment arrangement is a list $\Phi=\left\langle\varphi^1,\ldots,\varphi^{\ell}\right\rangle$ that partitions a market into $\ell$ independent sub-markets. Each element of the list $\varphi^i$ is a \textbf{sequence of tubes}, consisting of a list of subsets of workers $\left\langle W_1^i, W_2^i, \ldots, W_{n_i}^i \right\rangle$ and a list of subsets of jobs $\left\langle J_1^i, J_2^i, \ldots, J_{m_i}^i \right\rangle$. 
These sets of workers and jobs, which we denote by \textbf{tubes}, are such that:
 \[
  \bigcup_{j=1,\ldots,\ell}\bigcup_{i=1,\ldots,n_j} W_i^j=W\text{    and    } \bigcup_{j=1,\ldots,\ell}\bigcup_{i=1,\ldots,m_j}J_i^j=J,
 \]
and for every $a\neq c$ or $b\neq d$, $W_a^b\cap W_c^d=\emptyset$ and $J_a^b\cap J_c^d=\emptyset$. In other words, the sets listed in the sequences of tubes partition the sets of workers and jobs.

An assignment arrangement represents two aspects of the assignment procedure. First, it allows for the problem to be partitioned into independent sub-markets. This is what is done in the \Qing procedures: candidates and jobs are split into separate and independent matching sub-problems for each type of job. Second, it represents the order in which different subsets of candidates and jobs are considered. A sequence of tubes $\left\{\left\langle W_1, W_2\right\rangle,\left\langle J_1, J_2\right\rangle\right\}$, for example, indicates that first the workers in $W_1$ will be considered and matched to jobs. Only after all the workers in $W_1$ are matched to a job (or are left without any remaining compatible job), the workers in $W_2$ will be considered. When it comes to jobs, take any worker (regardless of whether it is a worker in $W_1$ or $W_2$), jobs in $J_2$ will only be considered if there are no jobs compatible with that worker among those remaining in $J_1$. This representation not only allows the modeling of all the procedures, but also the extension to a new procedure that we introduce in Section \ref{sec:NewMechanisms}.

Given a market, an assignment plan together with an assignment arrangement can be combined to represent an execution of the matching procedures that we model. Consider a market $\langle W,J,\mathcal{C}\rangle$ and an assignment plan $\left\langle \succ^W,\left(\succ_w^J\right)_{w\in W}\right\rangle$. We next describe how the assignment arrangement produces a matching of workers to jobs. To do so, it is helpful to use the following notation. Given a set $I\subseteq W\cup J$ and an order $\succ$, we denote $top^{\succ}(I)$ as the top element of $I$ in the order $\succ$, that is, $top^{\succ}(I)=\left\{i\in I|\forall i'\in I\backslash \{i\}: i\succ i'\right\}$.

For each element $\varphi^i$ in the assignment arrangement, the steps below are followed to produce a matching $\mu$:

\begin{enumerate}
    \item Let $J^*=\cup_{k=1}^{m_i}J_{k}^{i}$.
    \item For each $t=1,\ldots,n$, let $A^t=W_t^i$ and repeat the following procedure until $A^t=\emptyset$: 
    \begin{itemize}
        \item Let $w=top^{\succ^W}(A^t)$ and remove $w$ from $A^t$. There are two cases:
        \begin{itemize}
            \item $\mathcal{C}(w)\cap J^{*}=\emptyset$: let $\mu(w)=\emptyset$.
            \item $\mathcal{C}(w)\cap J^{*}\neq\emptyset$: let $a^*$ be the lowest value of $a$ such that $\mathcal{C}(w)\cap J^{*}\cap J^t_a\neq\emptyset$, and $j=top^{\succ_w^J}\left(\mathcal{C}(w)\cap J^{*}\cap J^t_{a^*}\right)$. Let $\mu(w)=j$, and remove $j$ from $J^{*}$.
        \end{itemize}
    \end{itemize}
\end{enumerate}

In other words, given $\varphi^i$ in the assignment arrangement, we follow the order of tubes of workers in $\varphi^i$, matching the workers in $W^i_1$ first, and only after considering all of them do we move to the next subset of workers $W^i_2$, and so on. Within each tube, the order in which workers are chosen is determined by $\succ^W$. Each worker $w$ is matched to a job in the first tube of jobs that still contains a compatible job. If that tube contains more than one compatible job, then he is matched to the job with highest priority in that tube with respect to $\succ^J_w$. If, however, there is no compatible job left, then he is left unmatched. We repeat these steps for each $\varphi^i$ in the assignment arrangement, which produces a matching of workers to jobs. 
We refer to the combination of an assignment arrangement with a method of producing an assignment plan as a \textbf{procedure}.

The key advantage of this model, for our purposes, is that it allows us to perform reliable comparative statics between procedures. By fixing an assignment plan and evaluating the matchings produced by two different assignment arrangements, we are able to compare, for example, two random procedures without picking different realizations of chance. Moreover, we are even able to compare deterministic procedures, such as the \Song, to a random one, such as the Second \Ming. If we are able to analyze the outcomes produced by these two procedures for any fixed assignment plan, then the fact that one is deterministic and the other is random is inconsequential for the analysis.

To facilitate our analysis of the procedures used in history, we consider in what follows a simplified setting where there are two categories of workers, $W^A$ (representing advanced scholars), and $W^B$ (representing tribute scholars and recommended men). This simplification constitutes a common denominator, across all dynasties that we consider, in the configuration of jobs and candidates. While there was an extra division within the category of advanced scholars during the \Song and \Ming dynasties, our simplified version is equivalent, for our analysis, to the one used during the \Qing dynasty.
On the other side, jobs are partitioned according to eligibility: jobs in $J^A$ and $J^B$ are eligible for workers in $W^A$ and $W^B$ respectively, and jobs in $J^{AB}$ are eligible for all workers. 
The partition reflects the fact that, in our application, some jobs were assigned only to advanced scholars, and some only to tribute scholars and recommended men, while some others could be matched to candidates from both categories. 
This simplification allows us to capture the main features of the civil service assignment without unnecessary complications that are not crucial to our results. Moreover, there is a finite set of regions $\mathcal{R}$. Each worker and each job is associated to a region in  $\mathcal{R}$. We abuse notation and denote by $\mathcal{R}(w)$ and $\mathcal{R}(j)$ the regions of worker $w$ and job $j$, respectively.

To capture the objective of assigning candidates with different degrees to jobs with matching levels, we formulate the following definition when evaluating a matching with respect to our simplified setting.

\begin{defn}
A matching $\mu$ \textbf{prioritizes more high-levels} than $\mu'$ if $|\mu|\geq |\mu'|$, $|\mu(J)\cap W^A|\geq |\mu'(J)\cap W^A|$, $|\mu(J^A)|\geq |\mu'(J^A)|$, and $|\mu(J^{AB})|\geq|\mu'(J^{AB})|$. A matching $\mu$ \textbf{prioritizes high-levels} if there does not exist another matching $\mu'$ that prioritizes more high-levels than $\mu$.
\end{defn}

In other words, a matching $\mu$ prioritizes more high-levels than another matching $\mu'$ if (i) $\mu$ does not match fewer workers than $\mu'$, (ii) $\mu$ does not match fewer workers from $W^A$ than $\mu'$, (iii) $\mu$ does not match fewer workers from $W^A$ to jobs in $J^A$ than $\mu'$, and (iv) it does not match fewer workers from $W^B$ to jobs in $J^{AB}$, except when compensated by an increase in workers from $W^A$ matched to these jobs.\footnote{Notice that the interpretations in items (iii) and (iv) are derived from more than one condition in the definition. More specifically, (iii) comes from $|\mu(J)\cap W^A|\geq |\mu'(J)\cap W^A|$ in combination with $|\mu(J^A)|\geq |\mu'(J^A)|$, which guarantees that, given that weakly more workers in $W^A$ should be matched to jobs under $\mu$, these should be matched to jobs in $J^A$, as opposed to $J^{AB}$. Conditional on that, (iv) draws on $|\mu(J^{AB})|\geq|\mu'(J^{AB})|$ to guarantee that workers in $W^B$ who could be matched to jobs in $J^{AB}$ or $J^B$ are matched to the former under $\mu$.} This definition captures, with relatively simple conditions, the objective of prioritizing the matching of higher-level jobs over lower-level ones, and of higher-level candidates over lower-level ones.\footnote{The definition is also implicitly treating jobs in $J^{AB}$ as ``mid-level'' jobs, which have lower priority than those in $J^{A}$, but higher than jobs in $J^{B}$.} With (i), it avoids considerations that involve not only which types of workers to which jobs, but also trade-offs between the overall number of jobs matched and their characteristics. When all eligibility constraints are satisfied, it guarantees the trade-offs that remain between which candidates and jobs to match are resolved toward matching more advanced jobs to more advanced candidates.   

Finally, for our historical analysis, it will be helpful to consider two different compatibility correspondences. The first one, denoted by $\mathcal{C}^{-}$, considers only the constraint on eligibility. That is, for any $w\in W$ and $j\in J$, $j\not\in\mathcal{C}^{-}(w)$ if and only if $w\in W^A$ and $j\in J^B$ or $w\in W^B$ and $j\in J^A$.
The second one, denoted by $\mathcal{C}^{+}$, considers both the constraints of eligibility and rule of avoidance. That is, for any $w\in W$ and $j\in J$, $j\in\mathcal{C}^{+}(w)$ if and only if $j\in\mathcal{C}^{-}(w)$ and $\mathcal{R}(j)\neq \mathcal{R}(w)$. By comparing the results we obtain with these two configurations we are able to evaluate the properties that the procedures have for a designer who considers all of the constraints and how they interact to one that was only focusing on eligibility.

\subsection{Models of the procedures}
\label{subsec:ProceduresModel}

We use the baseline model of markets, assignment plans, and assignment arrangements to represent the five historical procedures described in Section~\ref{sec:AssignmentsThroughHistory}. Given the nature of the historical sources, some simplifications are unavoidable; we make additional assumptions only when they are consequential for the analysis and keep them as parsimonious as possible. When a procedural detail is undocumented for a given period but explicitly described for earlier or later procedures, we assume continuity. Whenever feasible, we adopt assumptions that make procedures more comparable, so that our results do not hinge on inessential modeling differences. All modeled elements correspond directly to features documented in the historical sources.

\subsubsection{Song and Ming Procedures}
\label{subsubsec:SingleMarketProcedures}

The \Song, First \Ming, and Second \Ming procedures share a common assignment structure: all workers and jobs are processed in a single sequential assignment market, with no partitioning by job type. In all three cases, workers are considered in the order $\langle W^A, W^B\rangle$, where workers in each category are ranked by examination results. Once a worker is matched to a job, the assignment is final and not revised.

Formally, all three procedures share the same \emph{assignment arrangement},
\[
\varphi^{SM}=\left\{\langle W^A, W^B\rangle,\langle J\rangle\right\},
\]
which specifies that workers in $W^A$ are considered first, followed by workers in $W^B$, and that all jobs are contained in a single tube.

The procedures differ only in the specification of the \emph{assignment plan} $\langle \succ^W,(\succ^J_w)_{w\in W}\rangle$.

In all three procedures, the worker order $\succ^W$ satisfies: for any $w,w'\in W^I$ ($I\in\{A,B\}$), $w\succ^W w'$ if and only if $w$ obtained a higher examination result than $w'$.

The procedures differ in how the job orders $(\succ^J_w)_{w\in W}$ are generated:
\begin{itemize}
    \item \textbf{Song procedure.} For each worker $w$, $\succ^J_w$ represents the worker’s strict preference ordering over jobs. Moreover, we assume that a chosen job is approved whenever it is compatible with $w$ under eligibility and the rule of avoidance.
    
    \item \textbf{First Ming procedure.} Workers do not choose jobs. Instead, $\succ^J_w$ is interpreted as the priority order used by the Ministry of Personnel. For workers in $W^A$, jobs in $J^A$ are ranked above jobs in $J^{AB}$; for workers in $W^B$, jobs in $J^{AB}$ are ranked above jobs in $J^B$. Within each category, the specific ordering is immaterial for our results.
    
    \item \textbf{Second Ming procedure.} For each worker $w$, the job order $\succ^J_w$ is independently drawn from the uniform distribution over all permutations of $J$, representing a random draw from a tube containing all jobs compatible with $w$.
\end{itemize}

Thus, the three procedures are identical in their assignment arrangement and worker priority structure, and differ only in the source of the job ordering: preferences in the \Song procedure, discretionary assessment in the First \Ming procedure, and randomized draws in the Second \Ming procedure.

\subsubsection{The First \Qing Procedure}
\label{subsubsec:QingOneModel}

The First \Qing procedure departs from earlier ones by partitioning the assignment into independent markets by job type. Jobs are grouped into three categories $J^A$, $J^{AB}$, and $J^B$, and for each category a corresponding set of eligible workers is selected. Workers are assigned to these markets based on exam rank and, where applicable, quota rules.

Formally, the assignment arrangement consists of three sequences of tubes:
\[
\varphi^{Q1-A}=\{\langle W^A_1\rangle,\langle J^A\rangle\},\quad
\varphi^{Q1-AB}=\{\langle W^A_2\cup W^B_2\rangle,\langle J^{AB}\rangle\},\quad
\varphi^{Q1-B}=\{\langle W^B_1\rangle,\langle J^B\rangle\},
\]
where the worker sets are constructed so that the number of workers equals the number of jobs in each market.

Within each market, assignments are sequential and determined by drawing lots. Both the worker order $\succ^W$ and the job orders $(\succ^J_w)_{w\in W}$ are independently drawn uniformly at random. Matches violating the rule of avoidance are discarded and redrawn, and unmatched workers or jobs are carried over to the next appointment cycle.

\subsubsection{The Second \Qing Procedure}
\label{subsubsec:QingTwoModel}

The Second \Qing procedure modifies the First \Qing procedure only in the internal ordering of workers within each partitioned market. For each market, workers are split into two tubes: those who are incompatible with some jobs in the corresponding job tube due to the rule of avoidance, and those who are compatible with all jobs. The former are prioritized.

Formally, for each worker set $W^I_k$ used in the First \Qing procedure ($I\in\{A,B\}$, $k\in\{1,2\}$), define
\[
W^{I-P}_k=\{w\in W^I_k:\mathcal C^+(w)\cap J^I\neq J^I\},\qquad
W^{I-NP}_k=W^I_k\setminus W^{I-P}_k.
\]
The assignment arrangement is obtained from that of the First \Qing procedure by replacing each worker tube with an ordered pair $\langle W^{I-P}_k,W^{I-NP}_k\rangle$. The assignment plan remains otherwise unchanged: worker and job orders are independently drawn uniformly at random.

\section{Analysis of the assignment procedures and the changes}
\label{sec:AnalyzingTheProcedures}

This section presents the main comparative results of the paper. We evaluate how the matchings produced by the five historical procedures perform in terms of minimizing unfilled jobs and prioritizing high-level candidates, and how the successive changes in these procedures affect these outcomes.

Our analysis is conducted under two compatibility correspondences. Under $\mathcal{C}^{-}$, compatibility is determined solely by eligibility. Under $\mathcal{C}^{+}$, compatibility incorporates both eligibility and the rule of avoidance. Comparing outcomes under these two correspondences allows us to isolate the role played by the interaction between these constraints and to assess how the same procedural change can have qualitatively different effects depending on which constraints are operative.

\subsection{Main Results}
\label{subsec:AnalysisHistoricalMech-MainResults}

We begin with two theorems that summarize the paper’s central findings regarding the historical procedures. The first shows that reforms intended to improve assignment outcomes can backfire once the rule of avoidance is taken into account, even when evaluated on the same market and assignment plan. The second identifies the final \Qing reform as qualitatively distinct, yielding a monotonic improvement in match cardinality. Crucially, these results are comparative: they evaluate reforms relative to earlier procedures for the same market and assignment plan.

\begin{thm}
\label{prop:FromSongToMingOneLessAssortative}
When considering compatibility in terms of both eligibility and the rule of avoidance $\mathcal{C}^{+}$, there exist markets and assignment plans for which the matching produced by both the \Song and the Second \Ming procedures prioritizes high-levels and leaves no worker unmatched, whereas under the First \Ming, the First \Qing, and the Second \Qing procedures the matching fails to prioritize high-levels and leaves half of the workers unmatched.
\end{thm}
\begin{proof}
Let there be three regions: $X$, $Y$, and $Z$. Let, moreover, $W^A_X$ $\left(J^A_X\right)$ be the set of workers (jobs) from region $X$, and the same for $Y$ and $Z$.  Consider then a market where $|W^A|= 2n$, and $|J^{A}|= |J^{AB}|=n$, with $n \in \mathbb{N}^+$. Among the $A$ workers, $n$ of them come from region $X$, and the other $n$ of them come from region $Y$. Among the jobs, $n$ of the $AB$ jobs are from region $X$, and $n$ of the $A$ jobs are from region $Z$.   

Consider the following assignment plan $\left\langle \succ^W,\left(\succ_w^J\right)_{w\in W}\right\rangle$, where for every $w_y\in W^A_Y$ and $w_x\in W^A_X$ $w_y\succ^W w_x$; for every $w\in W$, $j_{x}\in J^{AB}_X$, $j_{z}\in J^{A}_Z$, $j_x \succ_w^J j_{z}$. That is, all workers in $W_Y^A$ are assigned before all workers in $W_X^A$; and for every worker, all jobs in $J_X^{AB}$ are assigned before all jobs in $J_Z^A$. 

The matching $\mu$ produced by both the \Song and Second \Ming procedures matches the $n$ workers in $W^A_Y$ to the $n$ jobs in $J^{AB}_X$, and the $n$ workers in $W^A_X$ to the $n$ jobs in $J^{A}_Z$. Notice that $\mu$ prioritizes high-levels and minimizes unfilled jobs.

The matching produced by the First and Second \Qing depends on the specific way in which the workers in $W^A$ are partitioned.\footnote{In practice, this followed the candidates' exam grades, so that those with higher grades were put into the sequence of tubes with the $J^{A}$ jobs, and those with lower grades to the remaining jobs. The arrangement we use relies, therefore, on the workers from region $X$ having lower grades.} We will consider the case in which the sequence of tubes for the $A$ jobs is $\left\{\left\langle W^{A}_Y \right\rangle,\left\langle J^{A}_Z\right\rangle\right\}$ and for the $AB$ jobs $\left\{\left\langle W^{A}_X \right\rangle,\left\langle J^{AB}_X\right\rangle\right\}$.

The matching $\mu'$ produced by the First \Ming and the First and the Second \Qing procedures---with the sequence of tubes above---matches the $n$ workers in $W^A_Y$ to the $n$ jobs in $J^{A}_Z$, but leaves the $n$ workers in $W^A_X$ and the $n$ jobs in $J^{AB}_X$ unmatched. 
Notice that the matching $\mu$ leaves no worker or job unmatched, and prioritizes more high-levels than $\mu'$.
\end{proof}

Theorem \ref{prop:FromSongToMingOneLessAssortative} formalizes a central tension in the historical evolution of the procedures. Although the First \Ming and First \Qing reforms were motivated by concerns about improving the prioritization of high-level candidates, their interaction with the rule of avoidance can lead to strictly worse outcomes along both dimensions. Moreover, the magnitude of this deterioration can be substantial.

A key implication of this result is that the direction of a reform’s effect depends critically on which constraints are taken into account. For the same market and assignment plan, a procedural change can appear beneficial when evaluated under eligibility alone, yet be harmful once avoidance constraints are introduced. Importantly, this reversal is not driven by different realizations of randomness or different priority orders, but by the interaction between constraints.

Our second main result shows that the final change in the historical sequence does not suffer from this drawback.

\begin{thm}
\label{prop:FromQingOneToQingTwo}
Consider any market and assignment plan, and let $\mu^{Q1}$ and $\mu^{Q2}$ denote the matchings produced by the First and Second \Qing procedures, respectively. Then $|\mu^{Q2}| \geq |\mu^{Q1}|$. Moreover, there exist markets and assignment plans for which $|\mu^{Q2}| > |\mu^{Q1}|$.
\end{thm}
\begin{proof}
    
Let $\varphi^1=\left\{\left\langle W \right\rangle,\left\langle J \right\rangle\right\}$ be any of the sequences of tubes in the First \Qing, and then let  $\varphi^2=\left\{\left\langle W^{P},W^{NP} \right\rangle,\left\langle J \right\rangle\right\}$ be the corresponding sequences of tubes in the Second \Qing.\footnote{For the purposes of this proof, the categories of the workers and jobs involved in these sequences of tubes is inconsequential, since from the perspective of these categories every worker is compatible with every job.}
    
First, notice that since workers in the set $W^{NP}$ are compatible with every job in $J$, the only way that some worker in $W^{NP}$ is left unmatched in both procedures is if all jobs are matched. Therefore, one way to show that $|\mu^{Q2}|\geq|\mu^{Q1}|$ is to show that every worker in $W^{P}$ who is matched to a job under the First \Qing is also matched to a job under the Second \Qing. 

Let $\left\langle \succ^W,\left(\succ_w^J\right)_{w\in W}\right\rangle$ be any assignment plan, and $\succ^W_{P}$ be the ranking $\succ^W$ restricted to the set of prioritized workers $W^{P}$. In addition, let $n_P=|W^P|$, and $rank_i(\succ)$ be the $i$-th highest-ranking worker in the ranking $\succ$. 
We now show by induction that, for every worker $w\in W^{P}$, the set 
of jobs that are still available in the step where $w$ is matched to a job under the First \Qing procedure, denoted by $J^{*Q1}_w$, is a subset of the jobs that are available when matching $w$ under the Second \Qing, denoted by $J^{*Q2}_w$. 

\textbf{Base}: Let $w^1=rank_1(\succ^W_{P})$. Under the Second \Qing Procedure, the set of jobs still available when matching $w^1$, the highest-ranking worker among the prioritized workers, is $J$. Therefore, $J^{*Q1}_{w^1}\subseteq J^{*Q2}_{w^1}=J$. 

\textbf{Step}: Assume that for every $1\leq i\leq k<n_P$ and $w^i=rank_i(\succ^W_{P})$, $J^{*Q1}_{w^i}\subseteq J^{*Q2}_{w^i}$, and consider a worker  $w^{k+1}=rank_{k+1}(\succ^W_{P})$. Suppose, for a contradiction, that there is a job $j^*$ that is available when matching him under the First \Qing procedure but not when matching him under the Second \Qing procedure, that is, formally, $j^*\in J^{*Q1}_{w^{k+1}}$ but $j^*\not\in J^{*Q2}_{w^{k+1}}$. 
By induction assumption and the fact that $j^*\in J^{*Q1}_{w^{k+1}}$, $j^*\in J^{*Q1}_{w^{k}}$  and therefore $j^*\in J^{*Q2}_{w^{k}}$. By construction of the Second \Qing procedure, the only way that this can happen is if the job $j^*$ is matched to worker $w^{k}$ in the Second \Qing procedure. That is, $j^*=top^{\succ^J_{w^{k}}}(J^{*Q2}_{w^{k}})$. 
Moreover, we know as a result of the immediate assignment by the procedure that 
$J^{*Q1}_{w^{k+1}}\subseteq J^{*Q1}_{w^k}$. 
By the contradiction assumption $j^*\in J^{*Q1}_{w^{k+1}}$, we therefore have $j^*\in J^{*Q1}_{w^{k}}$. 
On the other hand, the contradiction assumption $j^*\in J^{*Q1}_{w^{k+1}}$ implies that $j^*\neq top^{\succ^J_{w^{k}}}(J^{*Q1}_{w^{k}})$. But this is a contradiction with $j^*=top^{\succ^J_{w^{k}}}(J^{*Q2}_{w^{k}})$, since by induction assumption $J^{*Q1}_{w^k}\subseteq J^{*Q2}_{w^k}$.

Finally, to observe that there are some situations where $|\mu^{Q2}|>|\mu^{Q1}|$, we consider the following market. There are three regions: $X$, $Y$, and $Z$. Moreover, let $W^A_X$ ($J^A_X$) be the set of workers (jobs) from region $X$, and the same for $Y$ and $Z$. Let $|W^A_X|=|W^A_Y|=n$ and $|J^A_X|=|J^A_Z|=n$.

Consider the following assignment plan $\left\langle \succ^W,\left(\succ_w^J\right)_{w\in W}\right\rangle$, where for every $w_y\in W^A_Y$ and $w_x\in W^A_X$ $w_y\succ^W w_x$, for every $w\in W$, $j_{x}\in J^{A}_X$, $j_{z}\in J^{A}_Z$, $j_z \succ_w^J j_{x}$. 

The matching $\mu$ produced by the First \Qing procedure matches the $n$ workers in $W^A_Y$ to the $n$ jobs in $J^{A}_Z$, but leaves the remaining $n$ workers in $W^A_X$ and $n$ jobs in $J^A_X$ unmatched. 

The matching $\mu$ produced by the Second \Qing procedure matches the $n$ workers in $W^A_X$ to the $n$ jobs in $J^{A}_Z$, and the $n$ workers in $W^A_X$ to the $n$ jobs in $J^A_X$.
\end{proof}

Theorem \ref{prop:FromQingOneToQingTwo} isolates the Second \Qing reform as the only change in the sequence that yields a weakly monotonic improvement in match cardinality for all markets and assignment plans. The material that follows decomposes these results by separately analyzing the roles of eligibility and avoidance constraints, and by explaining the mechanisms through which the reversals in Theorem \ref{prop:FromSongToMingOneLessAssortative} arise.

\subsection{Mechanisms and comparative statics}
\label{subsec:PropertiesProceduresAndComparativeStatics}

To understand why the reversals identified in Theorem \ref{prop:FromSongToMingOneLessAssortative} can arise, we first illustrate a generic limitation of sequential assignment mechanisms. This limitation is independent of the historical details and applies even in very simple markets. We then formalize these forces through a sequence of propositions that separately analyze eligibility constraints and their interaction with the rule of avoidance.

All the procedures we evaluate were implemented by matching workers to jobs sequentially. Despite its transparency and simplicity, this type of mechanism may suffer from some shortcomings due to the ``greedy'' way in which they process the matchings.

More specifically, both the minimization of unfilled jobs and the prioritization of high-levels are properties that, in general, are incompatible with matching each worker with an arbitrary compatible job, one at a time. To see this, consider the example below.\footnote{While the example considers eligibility constraints, one can easily construct an analogous one using only the rule of avoidance.}

\begin{example}\label{ex:inefficiencyc-}
Consider two workers $\{w_a,w_b\}$ and two jobs $\{j_a,j_{ab}\}$, where $w_a$ is eligible for both jobs but $w_b$ is eligible only for $j_{ab}$. If $w_a$ is matched first to $j_{ab}$, then $w_b$ remains unmatched; yet a feasible matching assigns $\mu(w_a)=j_a$ and $\mu(w_b)=j_{ab}$.
\end{example}

The example illustrates a generic limitation of sequential (``greedy'') assignment: processing one worker at a time can waste scarce compatible jobs and fail to minimize unfilled positions. A natural remedy would be to allow ex post revisions (e.g., exchanges), but such revisions were apparently resisted in practice, plausibly because they undermine transparency, especially when assignments are determined by lots. Indeed, an exchange rule was proposed during the early years of the Second \Ming procedure\footnote{In 1602, in a correspondence to the court by Li Dai---the minister of the Ministry of Personnel at the time, it was suggested that a candidate who either draws an incompatible job or ends up with no compatible jobs left would be able to exchange his assignment with some candidate matched with a compatible job in a mutually acceptable way \citep{Wu1609}.}, but there is no indication it was implemented; and mid-\Qing sources mention adding extra candidates to reduce unfilled jobs \citep{zeli1886}. These concerns are consistent with the fact that producing maximum matchings under such constraints is a nontrivial combinatorial problem (e.g., the Hungarian method dates to the 1950s \citep{kuhn1955hungarian}, and online matching remains an active literature \citep{feng2014online}).

Our first result shows that, when considering only eligibility constraints, the changes in the procedures were consistent with the objective of minimizing unfilled jobs.

\begin{prop}
\label{prop:CardinalEfficiencyAllProceduresCMinus}
When considering only compatibility in terms of eligibility $\mathcal{C}^{-}$, the First \Ming, First \Qing, and Second \Qing procedures produce matchings that minimize unfilled jobs, while there are markets and assignment plans for which the \Song and the Second \Ming procedures produce matchings that do not minimize unfilled jobs.
\end{prop}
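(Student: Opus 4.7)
The plan is to split the statement into three positive claims and two negative claims, and to exploit the very simple structure that $\mathcal{C}^{-}$ imposes on the bipartite compatibility graph: every $w\in W^A$ is compatible with every $j\in J^A\cup J^{AB}$, every $w\in W^B$ is compatible with every $j\in J^{AB}\cup J^B$, and all remaining pairs are incompatible. This chain-like structure is what makes the greedy sequential procedures well-behaved and permits an almost immediate analysis.

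For the First \Qing and Second \Qing procedures I would argue directly from the assignment arrangement. By construction, the \Qing arrangements partition the market into independent sub-markets with $|W_1^A|=|J^A|$, $|W_2^A\cup W_2^B|=|J^{AB}|$, and $|W_1^B|=|J^B|$; under $\mathcal{C}^{-}$ every worker in a sub-market is compatible with every job in the corresponding tube of jobs, so the sequential procedure produces a perfect matching in each sub-market---a matching of total size $|J|$, trivially of maximum cardinality. The Second \Qing procedure then reduces to the First \Qing procedure under $\mathcal{C}^{-}$: its priority tubes $W^{I-P}_{k}$ collect only workers who are incompatible with some job in their tube, and the unique source of incompatibility within a \Qing sub-market is the rule of avoidance, which is inactive under $\mathcal{C}^{-}$. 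Hence every priority tube is empty and the two \Qing procedures coincide on this input.

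For the First \Ming procedure, whose arrangement is $\langle W^A,W^B\rangle$ over $\langle J^A,J^{AB},J^B\rangle$, workers in $W^A$ drain $J^A$ before spilling into $J^{AB}$, and workers in $W^B$ drain the remainder of $J^{AB}$ before spilling into $J^B$. I would establish that this greedy allocation is maximal via a short exchange argument: starting from any feasible matching $\mu'$, redirect any $W^A$ worker matched to $J^{AB}$ to an unmatched $J^A$ job, and then any $W^B$ worker matched to $J^B$ to an unmatched $J^{AB}$ job. Neither operation decreases $|\mu'|$, and once they are exhausted $\mu'$ has the same structural shape as the First \Ming outcome, so $|\mu'|$ cannot exceed the size of the First \Ming matching. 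An equivalent compact verification uses Hall's theorem: the only subsets of $W$ whose neighborhood can be deficient are $W^A$, $W^B$, and $W^A\cup W^B$, and a direct count of the matching First \Ming produces (in terms of $|W^A|,|W^B|,|J^A|,|J^{AB}|,|J^B|$) saturates each of these three bounds. This is the most delicate step of the argument and the main technical obstacle; the other positive cases are essentially immediate from the restrictive \Qing arrangement.

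The negative claims follow from Example \ref{ex:inefficiencyc-}. For the \Song procedure, taking $w_a\succ^W w_b$ together with the preference $j_{ab}\succ_{w_a}^J j_a$ assigns $w_a$ to $j_{ab}$ and leaves $w_b$ unmatched, while the matching $\{(w_a,j_a),(w_b,j_{ab})\}$ is feasible and strictly larger. For the Second \Ming procedure, the same market with the same $\succ^W$ and the realization $j_{ab}\succ_{w_a}^J j_a$---a valid assignment plan that occurs with positive probability under the uniform distribution over orderings of $J$---produces the identical deficient matching. A single such instance suffices, since the statement only asserts the existence of a market and an assignment plan for which minimization of unfilled jobs fails.
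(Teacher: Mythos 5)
Your decomposition into three positive claims and two negative claims matches the paper's, and two of the three components are handled essentially as the paper handles them: for the \Qing procedures the paper likewise observes that each sequence of tubes contains mutually compatible workers and jobs in equal numbers under $\mathcal{C}^{-}$, so everyone is matched (your extra remark that the priority tubes of the Second \Qing are empty under $\mathcal{C}^{-}$ is a nice clarification the paper leaves implicit); and for the negative claims the paper uses an $n$-fold scaling of Example \ref{ex:inefficiencyc-} with exactly the assignment plan you describe ($A$ workers first, $J^{AB}$ jobs considered before $J^{A}$ jobs), so your two-agent instance is the same construction at $n=1$.

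Where you genuinely diverge is the First \Ming case. The paper argues directly about the output: it lists the circumstances under which a job in $J^{A}$, $J^{AB}$, or $J^{B}$ can be left unfilled (respectively: $|W^A|<|J^A|$; every worker already matched; every worker in $W^B$ already matched) and notes that in each circumstance no feasible matching could fill that job without unfilling another. Your Hall/deficiency route is a sound alternative---the only deficient sets are indeed $W^A$, $W^B$, and $W$, and the greedy totals do meet $\min\left(|W|,\,|W^B|+|J^A|+|J^{AB}|,\,|W^A|+|J^{AB}|+|J^B|,\,|J|\right)$---though you would need to actually carry out that count rather than assert it. Your exchange argument, however, has a gap as stated: after redirecting $W^A$ workers from $J^{AB}$ to unmatched $J^A$ jobs and $W^B$ workers from $J^B$ to unmatched $J^{AB}$ jobs, the resulting matching has the First \Ming ``shape,'' but that shape alone does not bound its cardinality by the First \Ming output---the empty matching vacuously has the same shape. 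You would additionally need to argue that the transformed matching is maximal (no compatible unmatched worker--job pair remains) or, more directly, that the First \Ming output itself admits no augmenting structure, which is in effect what the paper's case analysis establishes. Since your Hall alternative closes this hole, the proposal is salvageable as written, but the exchange argument should not be presented as self-contained.
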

\begin{proof}
First, recall that the First \Ming procedure tries to match workers to jobs that fit their degrees, following the order to match as many workers in $W^A$ first to jobs in $J^A$ and then to $J^{AB}$, and afterwards, match as many workers in $W^B$ to jobs in the remaining $J^{AB}$ jobs and then to $J^{B}$. Therefore, (i) jobs in $J^A$ will only be left unmatched if $|W^A|<|J^A|$, (ii) jobs in $J^{AB}$ will only be left unmatched if every worker is matched, and (iii) jobs in $J^{B}$ will only be left unmatched if all workers in $W^B$ are matched. One can clearly see that when some job is left unmatched, the three observations above imply that there is no alternative matching that could match that job without leaving another unmatched. 

Second, notice that every sequence of tubes under the First and Second \Qing procedures contains workers and jobs that are mutually compatible with respect to $\mathcal{C}^{-}$. Since the number of workers and jobs are equal in each one of these sequences of tubes, every worker is matched to a job, and therefore the matching produced by these two procedures minimizes unfilled jobs. 

Next, to see that the matchings produced by the \Song and Second \Ming procedures might not minimize unfilled jobs, consider a market where there is an equal number of $A$ workers, $B$ workers, $A$ jobs, and $AB$ jobs, and $|W^A|=|W^B|=|J^{A}|=|J^{AB}|=n$ with $n \in \mathbb{N}^+$. 
Consider moreover an assignment plan, where $A$ workers are assigned before $B$ workers, and for every worker, $AB$ jobs are considered before $A$ jobs. That is,  
for every $w_a\in W^A$ and $w_b\in W^B$, $w_a\succ^W w_b$; and for every $w\in W$, $j_{ab}\in J^{AB}$ and $j_{a}\in J^{A}$, $j_{ab}\succ_w^J j_{a}$. 

The matchings produced by both the \Song and the Second \Ming procedures match the $n$ workers in $W^A$ to the $n$ jobs in $J^{AB}$, leaving all the workers in $W^B$ and jobs in $J^A$ unmatched. Notice, however, that there is an alternative matching that matches every worker and every job: match all workers in $W^A$ to the jobs in $J^A$, and all workers in $W^B$ to the jobs in $J^{AB}$.
\end{proof}

The proposition below shows that when rule of avoidance is taken into consideration, there are configurations in which none of the procedures used minimizes unfilled jobs.

\begin{prop}
\label{prop:CardinalEfficiencyAllProceduresCPlus}
When considering compatibility in terms of both eligibility and rule of avoidance $\mathcal{C}^{+}$, there are markets and assignment plans for which all the procedures produce matchings that do not minimize unfilled jobs.
\end{prop}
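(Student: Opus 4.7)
The approach is to exhibit, for each of the five procedures, a small market under $\mathcal{C}^{+}$ together with an assignment plan (and, for Second \Qing, an assignment arrangement consistent with the model's size constraints) for which the procedure produces a matching strictly smaller than some feasible matching. For the \Song and Second \Ming procedures, the counterexamples constructed in the proof of Proposition~\ref{prop:CardinalEfficiencyAllProceduresCMinus} carry over immediately: placing every worker in a region that contains no job he is eligible for makes the rule of avoidance vacuous, so that $\mathcal{C}^{+}$ coincides with $\mathcal{C}^{-}$ on the relevant pairs and the previous argument goes through verbatim. The substantive cases are therefore First \Ming, First \Qing, and Second \Qing, each of which minimizes unfilled jobs under $\mathcal{C}^{-}$.

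For the First \Ming and First \Qing procedures, I would use the market with $W^{A}=\{a\}$ from region $r_{1}$, $W^{B}=\{b\}$ from region $r_{2}$, $J^{A}=J^{B}=\emptyset$, and $J^{AB}=\{j_{1},j_{2}\}$ located in $r_{2}$ and $r_{3}$ respectively. Here $a$ is compatible with both jobs while $b$ is compatible only with $j_{2}$, so the feasible matching $\mu(a)=j_{1}$, $\mu(b)=j_{2}$ has size two. Consider an assignment plan in which $j_{2}$ sits above $j_{1}$ in $\succ_{a}^{J}$. Under First \Ming, worker $a$ is processed first and is greedily matched to his top-ranked eligible job $j_{2}$; under First \Qing, once the realization of $\succ^{W}$ within $\varphi^{Q1-AB}$ places $a$ before $b$, the same draw occurs. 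In both cases, $b$ subsequently faces only the incompatible $j_{1}$ and is left unmatched, so exactly one worker is matched while two is feasible.

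For Second \Qing the construction above does not work, because $b$ is the only candidate with an avoidance conflict in $J^{AB}$, so he is prioritized and draws first, ultimately taking $j_{2}$ and returning $j_{1}$ to the tube where $a$ can still pick it up. Instead, I would use $W^{A}=\{a_{1},a_{2}\}$ with $a_{1}$ from $r_{1}$, $a_{2}$ from $r_{2}$, $W^{B}=\emptyset$, $J^{A}=\{j^{A}\}$ located in $r_{1}$, and $J^{AB}=\{j^{AB}\}$ located in $r_{2}$. The size constraints force $|W^{A}_{1}|=|W^{A}_{2}|=1$, and one admissible arrangement is $W^{A}_{1}=\{a_{1}\}$, $W^{A}_{2}=\{a_{2}\}$. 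Then in $\varphi^{Q2-A}$ the sole candidate $a_{1}$ faces only $j^{A}$, which lies in his home region, and is unmatched; symmetrically, in $\varphi^{Q2-AB}$ candidate $a_{2}$ faces only $j^{AB}$ in his home region and is unmatched. Yet $\mu(a_{1})=j^{AB}$, $\mu(a_{2})=j^{A}$ is feasible and matches both workers.

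The main obstacle is the Second \Qing case. Its priority-based drawing inside a single sub-market is quite robust against avoidance-induced mismatches, so the counterexample must exploit the rigidity of the partition across sub-markets: once a worker is routed to the tube associated with one job type, his compatible jobs in another type become inaccessible, and no swap between sub-markets is ever attempted. Producing a market in which the cardinality constraints of the arrangement force a partition whose two sub-markets each fail due to avoidance is what distinguishes this case from the others and is the only genuinely delicate step of the proof.
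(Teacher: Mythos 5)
Your proposal is correct, but it takes a genuinely different route from the paper. The paper exhibits a \emph{single} market and assignment plan that defeats all five procedures simultaneously: three regions $X,Y,Z$, only $A$ workers and $A$ jobs with $|W^A|=|J^A|=2n$ ($n$ workers each from $X$ and $Y$; $n$ jobs from $Z$, $n-1$ from $X$, one from $Y$), and a plan that sends the $Y$ workers first to the $Z$ jobs, after which $n-1$ of the $X$ workers face only $X$ jobs and go unmatched. Because every worker there has some job to avoid, the Second \Qing prioritization is neutralized (everyone lands in the priority tube), and because there is only one job type, the partition into sub-markets is trivial; the failure is driven purely by the greedy within-tube order interacting with avoidance. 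You instead build tailored counterexamples per procedure: recycling the $\mathcal{C}^{-}$ examples for \Song and Second \Ming (correct, since making avoidance vacuous collapses $\mathcal{C}^{+}$ to $\mathcal{C}^{-}$), a two-worker example for First \Ming and First \Qing, and, for Second \Qing, a market whose cardinality constraints force a partition in which each sub-market's lone worker faces only a home-region job. Your Second \Qing example isolates a failure mode the paper's does not---rigidity of the partition \emph{across} sub-markets, where no swap between job types is possible---and your diagnosis of why the two-worker example fails against Second \Qing is exactly right. What the paper's construction buys in exchange is economy: one family of markets serves all five procedures at once, is reused verbatim to establish Proposition \ref{prop:AssortativenessAllProceduresCPlus}, and is later invoked to show that regional sufficiency does not rescue the historical procedures. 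Both arguments are valid; note only that your choice of the partition $W^A_1=\{a_1\}$, $W^A_2=\{a_2\}$ is an admissible modeling choice of the arrangement, in the same spirit as the partition the paper fixes in the proof of Theorem \ref{prop:FromSongToMingOneLessAssortative}.
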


\begin{proof}
Let there be three regions: $X$, $Y$, and $Z$. Moreover, let $W^A_X$ ($J^A_X$) be the set of workers (jobs) from region $X$, and define the same for $Y$ and $Z$ respectively. Consider a market where there is an equal number of $A$ workers and $A$ jobs, with $|W^A|=|J^{A}|=2n$ and $n \in \mathbb{N}^+$, 
and there are no $B$ workers nor $B$ or $AB$ jobs. 
In addition, within $A$ workers, there are $n$ from region $X$ and $n$ from region $Y$. 
On the job side, within $A$ jobs, we have $n$ from region $Z$, $n-1$ from region $X$, and 1 from region $Y$. 
Moreover, consider an assignment plan 
where for every $w_y\in W^A_Y$ and $w_x\in W^A_X$,  $w_y\succ^W w_x$; and for every $w\in W$, $j_{z}\in J^{A}_Z$, $j_{x}\in J^{A}_X$, and $j_{y}\in J^{A}_Y$, $j_z \succ_w^J j_{x}$ and $j_x \succ_w^J j_{y}$. Put differently, the assignment plan is such that all $A$ workers from region $Y$ are assigned before all $A$ workers from region $X$, and for every worker, jobs from region $Z$ are considered before jobs from region $X$, which are considered before jobs from region $Y$. 

Then, the matchings produced by all the procedures that we are evaluating match the $n$ workers in $W^A_Y$ to the $n$ jobs in $J^A_Z$, and one worker in $W^A_X$ to the job in $J^A_Y$, leaving all the remaining $n-1$ workers in $W^A_X$, and the $n-1$ jobs in $J^A_X$ unmatched. Notice, however, that there is an alternative matching that matches every worker and every job: match $n-1$ workers in $W^A_X$ to jobs in $J^A_Z$, one worker in $W^A_X$ to the job in $J^A_Y$, $n-1$ workers in $W^A_Y$ to jobs in $J^A_X$, and one worker in $W^A_Y$ to a job in $J^A_Z$. Therefore, the matchings produced by all the procedures do not minimize unfilled jobs when taking both eligibility and rule of avoidance into account. 
\end{proof}

Next, we consider the extent to which the matchings produced by all the procedures prioritize high-levels. As before, we first discuss the case where we only take eligibility into account and find that changes were also consistent with this objective.

\begin{prop}
\label{prop:AssortativenessAllProceduresCMinus}
When considering only compatibility in terms of eligibility $\mathcal{C}^{-}$, the First Ming\textsuperscript{$\mathbb{2}$}, First Qing\textsuperscript{$\mathbb{3}$}, and Second \Qing procedures produce matchings that prioritize high-levels, while there are markets and assignment plans for which the \Song and the Second \Ming procedures produce matchings that do not prioritize high-levels.
\end{prop}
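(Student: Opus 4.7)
The plan is to establish the positive part (First Ming, First Qing, Second Qing) and the negative part (Song, Second Ming) separately. Throughout, I would use the fact that Proposition \ref{prop:CardinalEfficiencyAllProceduresCMinus} already gives condition (i), $|\mu|\geq|\mu'|$, for the three positive procedures; only conditions (ii)--(iv) require fresh work in those cases. The definition of ``prioritizes high-levels'' then reduces to showing that $|\mu(J)\cap W^A|$, $|\mu(J^A)|$, and $|\mu(J^{AB})|$ each attain their feasibility upper bounds.

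For the First Ming procedure, I would exploit the arrangement $\{\langle W^A,W^B\rangle,\langle J^A,J^{AB},J^B\rangle\}$: workers in $W^A$ are processed before those in $W^B$, and at every worker's turn the tube $J^A$ is consulted before $J^{AB}$. Because only $W^A$ can fill $J^A$ jobs, this forces $|\mu(J^A)|=\min\{|W^A|,|J^A|\}$, which is the trivial upper bound. Once $J^A$ is exhausted, any leftover $W^A$ workers cascade to $J^{AB}$, and afterwards $W^B$ workers fill the remainder of $J^{AB}$ and then $J^B$. A short case split on whether $|W^A|\gtrless |J^A|$ verifies that $|\mu(J)\cap W^A|$ and $|\mu(J^{AB})|$ also attain their feasibility bounds, matching (ii) and (iv).

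For the First Qing procedure, the size assumptions $|W^A_1|=|J^A|$, $|W^A_2\cup W^B_2|=|J^{AB}|$, $|W^B_1|=|J^B|$, combined with the fact that under $\mathcal{C}^{-}$ every worker in a given tube is compatible with every job in the matched tube, imply that every worker is matched and every job is filled, so all four quantities attain their global upper bounds. For the Second Qing procedure under $\mathcal{C}^{-}$, the priority/non-priority split is driven solely by rule-of-avoidance incompatibilities, which disappear when only eligibility is considered; thus the Second Qing procedure collapses to the First Qing procedure and the same conclusion holds.

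For the negative part, I would reuse the counter-example from Proposition \ref{prop:CardinalEfficiencyAllProceduresCMinus}: a market with $|W^A|=|W^B|=|J^A|=|J^{AB}|=n$ and an assignment plan in which $A$-workers precede $B$-workers in $\succ^W$ and, for every worker $w$, jobs in $J^{AB}$ rank above jobs in $J^A$ in $\succ^J_w$. Under both the Song and Second Ming procedures (where the Second Ming case requires only the existence of one realization of the random draws that coincides with this plan, since the uniform distribution assigns it positive probability), the $A$-workers drain $J^{AB}$, leaving $|\mu|=n$, $|\mu(J^A)|=0$, and all of $W^B$ unmatched. The alternative matching pairing $W^A$ with $J^A$ and $W^B$ with $J^{AB}$ yields $|\mu'|=2n$, $|\mu'(J^A)|=n$, $|\mu'(J)\cap W^A|=n$, and $|\mu'(J^{AB})|=n$, and therefore prioritizes strictly more high-levels. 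The main obstacle is the bookkeeping in the First Ming case, which simultaneously tracks four cardinalities across multiple regimes; the clean simplification is the observation that the eligibility asymmetry (only $W^A$ can fill $J^A$) plus the greedy tube order makes the maximization of $|\mu(J^A)|$ automatic, after which the remaining quantities are pinned down by the feasibility constraints.
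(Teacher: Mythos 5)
Your overall strategy mirrors the paper's: direct verification for the First \Ming and the two \Qing procedures (the \Qing cases via the equal-cardinality, mutually compatible tubes, with the Second \Qing collapsing to the First \Qing under $\mathcal{C}^{-}$ because the priority split is driven solely by the rule of avoidance), and for the negative part the same market with $|W^A|=|W^B|=|J^A|=|J^{AB}|=n$, $A$-workers drawn first, and $J^{AB}$ ranked above $J^A$ in every $\succ_w^J$, against which the assortative alternative matching dominates.

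One step in your First \Ming argument is overstated, though the conclusion survives. You reduce ``prioritizes high-levels'' to showing that $|\mu(J)\cap W^A|$, $|\mu(J^A)|$ and $|\mu(J^{AB})|$ each attain their feasibility upper bounds, but $|\mu(J^{AB})|$ need not attain its unconstrained maximum under the First \Ming procedure. Take $|W^A|=|J^A|=|J^{AB}|=1$ and $W^B=J^B=\emptyset$: the procedure sends the lone $A$-worker to the $J^A$ job, so $|\mu(J^{AB})|=0$, while a feasible matching achieves $|\mu'(J^{AB})|=1$. The produced matching still prioritizes high-levels, because that alternative sacrifices $|\mu'(J^A)|$; the correct target is therefore that no alternative matching weakly dominates in all four coordinates \emph{simultaneously}, not that each coordinate is separately maximized. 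The paper's proof does exactly this, recording regime-by-regime facts (a $J^A$ job is unfilled only if $|W^A|<|J^A|$; a $J^{AB}$ job is unfilled only if every worker is already matched; a $W^B$ worker is unmatched only if all of $J^{AB}\cup J^B$ is filled) and concluding that a coordinate-wise improvement is impossible. Your case split on $|W^A|\gtrless|J^A|$ reaches the same conclusion once ``attains its feasibility bound'' is replaced by this dominance comparison.
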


\begin{proof}

First, recall in the First and Second \Qing procedures, workers and jobs are partitioned into three sequences of tubes in which they are all compatible with respect to $\mathcal{C}^{-}$. Since the number of workers and jobs is equal in each one of these sequences of tubes, every worker is matched to a job. Moreover, every job in $J^A$ is matched to a worker in $W^A$. These two facts together imply that this matching prioritizes high-levels.

Next, consider the First \Ming procedure. Regardless of the assignment plan, the following observations can be made about the matching produced:

\begin{enumerate}
    \item An unmatched job in $J^A$ is possible only if there are less $A$ workers than $A$ jobs, $|W^A|<|J^A|$. 
    \item If $|J^A|<|W^A|\leq |J^{A}|+|J^{AB}|$, then every job in $J^A$ is matched to a worker in $W^A$, and every worker in $W^A$ is matched either to a job in $J^A$ or $J^{AB}$. 
    \item If $|W^A|>|J^{A}|+|J^{AB}|$, then every job in $J^A$ and $J^{AB}$ is matched to a worker in $W^A$.
    \item If a job in $J^{AB}$ is left unmatched, then every worker is already matched to a job. If a worker in $W^B$ is left unmatched, then every job in $J^{AB}\cup J^{B}$ is already matched to a worker.
\end{enumerate}
These observations combined imply that the matching produced by the First \Ming procedure prioritizes high-levels.

Finally, to see that the \Song and the Second \Ming procedures may not prioritize high-levels, consider the case in which $|W^A|=|J^B|=|J^{AB}|=n$ and $n \in \mathbb{N}^+$. Moreover, let the assignment plan be such that for every $w_a\in W^A$ and $w_b\in W^b$, $w_a\succ^W w_b$; and for every $w\in W$, $j_{ab}\in J^{AB}$, $j_{a}\in J^{A}$, $j_{ab} \succ_w^J j_{a}$. That is, all workers in $W^A$ are matched before all workers in $W^B$, and for every worker, all jobs in $J^{AB}$ are considered before all jobs in $J^A$.
Then, the matching produced by the \Song and the Second \Ming procedures matches the $n$ workers in $W^A$ to the $n$ jobs in $J^{AB}$, leaving all the remaining $n$ workers in $W^B$, and the $n$ jobs in $J^A$ unmatched. Notice, however, that an alternative matching that matches every worker in $W^A$ to the jobs in $J^A$ and every worker in $W^B$ to the jobs in $J^{AB}$ prioritizes more high-levels than the one produced either by the \Song or the Second \Ming procedure.
\end{proof}

As for the minimization of unfilled jobs, the rule of avoidance also affects the prioritization of high-levels. The following shows that when we consider both eligibility and rule of avoidance as compatibility constraints, all procedures can result in matchings that do not prioritize high-levels. This result follows immediately from markets constructed from the proof of Proposition \ref{prop:CardinalEfficiencyAllProceduresCPlus}, and therefore the proof is omitted.

\begin{prop}
\label{prop:AssortativenessAllProceduresCPlus}
When considering compatibility in terms of both eligibility and rule of avoidance $\mathcal{C}^{+}$, 
there are markets and assignment plans for which all the procedures produce matchings that do not prioritize high-levels.
\end{prop}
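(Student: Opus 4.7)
The plan is to reuse verbatim the market and the assignment plan constructed in the proof of Proposition~\ref{prop:CardinalEfficiencyAllProceduresCPlus}. Recall that this market features only $A$-workers and $A$-jobs (so $W^B=J^B=J^{AB}=\emptyset$), with $2n$ workers and $2n$ jobs partitioned across three regions $X,Y,Z$ as specified there. Under that market and assignment plan, each of the five procedures produces the same matching $\mu$: the $n$ workers in $W^A_Y$ are matched to the $n$ jobs in $J^A_Z$, one worker in $W^A_X$ is matched to the single job in $J^A_Y$, and the remaining $n-1$ workers in $W^A_X$ together with the $n-1$ jobs in $J^A_X$ are left unmatched. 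In particular $|\mu|=n+1$.

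First, I would exhibit the alternative matching $\mu'$ already described in Proposition~\ref{prop:CardinalEfficiencyAllProceduresCPlus}, which matches every worker to a compatible job (and thus has $|\mu'|=2n$). Since all workers are in $W^A$ and all jobs are in $J^A$, we have $\mu'(J)\cap W^A=W^A$ and $\mu'(J^A)=W^A$, while $\mu(J)\cap W^A=\mu(J^A)$ has cardinality only $n+1$. Moreover, $J^{AB}=\emptyset$, so $|\mu(J^{AB})|=|\mu'(J^{AB})|=0$ trivially.

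Next, I would check the four defining inequalities of \textbf{prioritizes more high-levels} for the pair $(\mu',\mu)$: (i)~$|\mu'|=2n\geq n+1=|\mu|$; (ii)~$|\mu'(J)\cap W^A|=2n\geq n+1=|\mu(J)\cap W^A|$; (iii)~$|\mu'(J^A)|=2n\geq n+1=|\mu(J^A)|$; and (iv)~$|\mu'(J^{AB})|=0\geq 0=|\mu(J^{AB})|$. All four inequalities hold (with strict inequality in the first three when $n\geq 2$), so $\mu'$ prioritizes more high-levels than $\mu$, showing that $\mu$ does not prioritize high-levels.

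The only step that requires any care is verifying that every one of the five procedures, when applied to this specific market and assignment plan, indeed outputs the same matching $\mu$ described above; but this is exactly what is established in the proof of Proposition~\ref{prop:CardinalEfficiencyAllProceduresCPlus}, and no additional work beyond invoking that argument is required. The conclusion of Proposition~\ref{prop:AssortativenessAllProceduresCPlus} then follows immediately.
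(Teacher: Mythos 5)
Your proposal is correct and is exactly the route the paper takes: the paper explicitly omits this proof, stating that the result ``follows immediately from markets constructed from the proof of Proposition~\ref{prop:CardinalEfficiencyAllProceduresCPlus},'' which is the verification you carry out. Your only substantive addition is the explicit check of the four inequalities in the definition of prioritizing more high-levels (and the observation that one needs $n\geq 2$ for strictness), which is a harmless and welcome bit of extra care.
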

 
Notice that if we consider only the results involving compatibility in terms of eligibility, we would conclude that both the changes from the \Song procedure to the First \Ming procedure and from the Second \Ming procedure to the First \Qing procedure would unambiguously improve the matchings that were produced, both in terms of the minimization of unfilled jobs and prioritization of high-levels. However, when we compare with respect to compatibility in terms of eligibility and rule of avoidance, the same conclusion can no longer be made. 

Figure \ref{fig:summaryOfChanges} summarizes the above results. Note that we know from Proposition \ref{prop:CardinalEfficiencyAllProceduresCPlus}, despite the improvement yielded by the Second \Qing procedure, in general the Second \Qing procedure may still not match as many candidates as possible. In the section below, however, we show that in certain markets the addition of a single tube of jobs could result in a procedure that produces matchings that minimizes unfilled jobs and prioritize high-levels in \textit{every realization of chance}.

\begin{figure}
    \centering
    \includegraphics[scale=0.75]{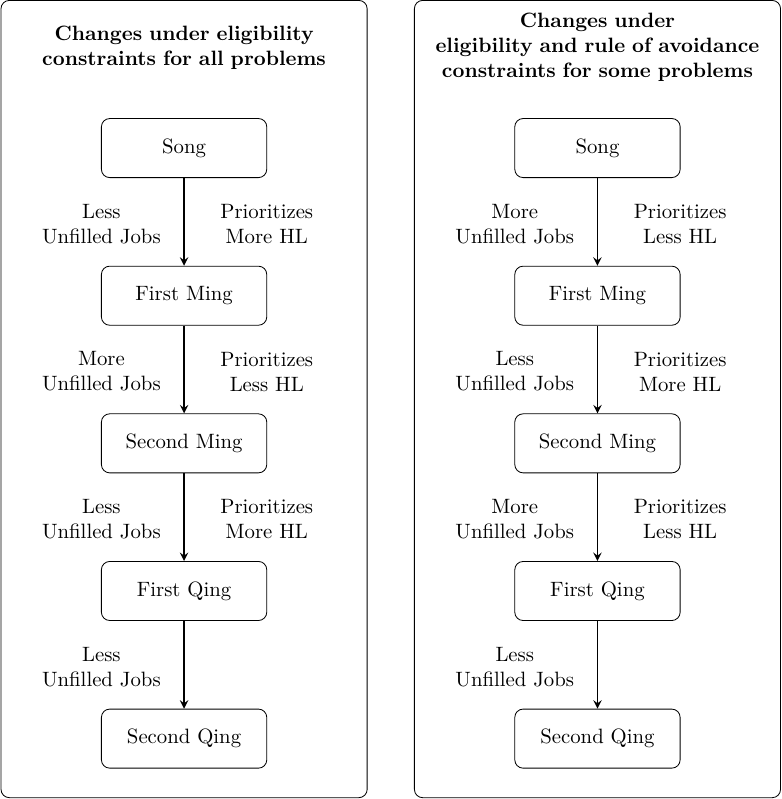}
    \caption{Summary of changes in terms of minimization of unfilled jobs and prioritization of high-levels (HL)}
    \label{fig:summaryOfChanges}
\end{figure}


\section{Optimal lots-drawing mechanisms}
\label{sec:NewMechanisms}

Having analyzed the historical evolution of assignment procedures, we now turn from evaluation to design. This section addresses the problem of proactively constructing a procedure that achieves the objectives identified in our analysis: minimizing unfilled jobs and prioritizing high-levels, while adhering to eligibility and rule of avoidance constraints.

The central challenge lies in the sequential, or ``greedy,'' nature of the lots-drawing process. The objective of minimizing unfilled jobs requires the procedure to produce a matching of maximum cardinality for every realization of chance. This is a complex constraint, as a seemingly innocuous match made early in the sequence can preclude the formation of a maximum matching later on. The following example illustrates this inherent complexity.\footnote{In this section we restrict our attention to assignment arrangements consisting of a single sequence of tubes. That is, we are not going to consider solutions that are based on segmenting the markets into different sub-markets which are then matched using separate sequences of tubes.}

\begin{example}
\label{example:motivatingUrnsDesign}
Consider a market with three workers, $w_1, w_2, w_3$, and three jobs, $j_1, j_2, j_3$, where the subscript indicates the region of the worker or job. The rule of avoidance is the only active constraint. In this market, there are two feasible matchings of maximum size:

\begin{align*}
    \mu_1 &= \{ (w_1, j_2), (w_2, j_3), (w_3, j_1) \} ,\\
    \mu_2 &= \{ (w_1, j_3), (w_2, j_1), (w_3, j_2) \}.
\end{align*}

Achieving a maximum matching imposes a dependency between assignments; a locally permissible match can prevent a globally optimal outcome.

For instance, suppose the procedure begins by matching worker $w_1$. The compatible jobs for $w_1$ are $j_2$ and $j_3$. Let's assume $w_1$ is matched to $j_2$. If worker $w_2$ is considered next, the available compatible jobs are $j_1$ and $j_3$. Suppose $w_2$ is then matched to $j_1$. This sequence of matches leaves worker $w_3$ and job $j_3$. As they are from the same region, the rule of avoidance prohibits this match, leaving both unassigned. The resulting matching has a size of two, whereas a matching of size three was possible.

This issue is not unique to a specific ordering of workers. The problem is symmetric. If, for instance, the procedure started by matching $w_3$ to $j_1$, a subsequent match of $w_1$ to $j_3$ would similarly leave $w_2$ and $j_2$ stranded and unmatchable. The greedy nature of a sequential procedure thus makes guaranteeing a maximum matching a non-trivial design challenge.
\end{example}

Guaranteeing a maximum matching with a sequential procedure thus requires embedding a form of conditionality into the process, a feature absent in the simple lots-drawing mechanisms discussed so far.\footnote{The lots-drawing procedure used in the UEFA Champions League group stage draw is an instance of a mechanism that implements this conditionality. After a team is drawn from a bowl, a computer calculates the set of groups into which that team can be placed while still permitting a valid final assignment for all remaining teams. The subsequent draw is then made only from that valid set of groups, ensuring that a maximum matching is always reached \citep{boczon2018goals}.} The designer's objective, therefore, is to craft a sequence of tubes that restricts and prioritizes matches in such a way that the greedy outcome is always maximal. This is far from a trivial enterprise. The structure of a maximum matching often depends on information about the overall structure of the compatibility graph, and not on local neighborhoods. In fact, when we consider arbitrary compatibility correspondences, the number of tubes that are necessary to guarantee the minimization of unfilled jobs using lots-drawing procedures can be arbitrarily large, as shown by the proposition below. 

\begin{prop}\label{prop:unboundedNumberOfTubes}
For every integer $k\ge 2$ there exists a market $\langle W,J,\mathcal C\rangle$ such that for any sequence of tubes
\[
\varphi=\Big\{\langle W_1,\ldots,W_n\rangle,\ \langle J_1,\ldots,J_m\rangle\Big\}
\]
with $m<k$, there exists an assignment plan $\langle \succ^W,(\succ^J_w)_{w\in W}\rangle$ for which the matching produced by $\varphi$ does not minimize unfilled jobs.
\end{prop}

\begin{proof}
Fix $k\ge 2$ and let

\begin{eqnarray*}
    W & = & \{u_1,\dots,u_{k-1}\}\ \cup\ \{w_1,\dots,w_{k-1}\}\ \cup\ \{z_k\},\\
J & = & \{a_1,\dots,a_k\}\ \cup\ \{x_1,\dots,x_{k-1}\}.
\end{eqnarray*}

Define the compatibility correspondence by
\[
\mathcal C(u_i)=\{a_i,a_{i+1}\}\ (1\le i\le k-1),\qquad
\mathcal C(z_k)=\{a_k\},\qquad
\mathcal C(w_i)=\{x_i\}\ (1\le i\le k-1),
\]
and no other pairs are compatible. The feasible matching
\[
\mu^\star(u_i)=a_i\ (1\le i\le k-1),\quad \mu^\star(z_k)=a_k,\quad \mu^\star(w_i)=x_i\ (1\le i\le k-1)
\]
fills every job, so any matching that minimizes unfilled jobs must have size $|J|=2k-1$.

\smallskip
Let $\{J_1,\ldots,J_m\}$ be any partition of $J$ into job tubes with $m<k$; the worker tubes $\{W_1,\ldots,W_n\}$ being arbitrary and fixed. For each $i\in\{1,\ldots,k\}$ let $p(i)\in\{1,\ldots,m\}$ be the unique index with $a_i\in J_{p(i)}$. Since $m<k$, there exists $i\in\{1,\ldots,k-1\}$ with $p(i)\ge p(i+1)$. Let $i$ be the smallest such index, then
\[
p(1)<p(2)<\cdots<p(i)\quad\text{and}\quad p(i)\ge p(i+1).
\]

\smallskip
We will pick \emph{only} (i) within-tube worker orders $\succ^W$ and (ii) job orders $(\succ^J_w)_{w\in W}$ so that, no matter how the workers are split across the tubes $W_1,\ldots,W_n$, when each $u_j$ is eventually processed the following happens:
\begin{enumerate}
\item For every $j<i$, $u_j$ takes $a_j$ (because $a_j$ lies in an earlier job tube than $a_{j+1}$).
\item $u_i$ takes $a_{i+1}$ (because $a_{i+1}$ is in a job tube no later than $a_i$, and we break ties toward $a_{i+1}$).
\item Consequently, the only potential takers of $a_i$---namely $u_{i-1}$ and $u_i$---are already committed to $a_{i-1}$ and $a_{i+1}$ when they are processed, so $a_i$ is never matched.
\end{enumerate}
Notice that this argument is independent of which worker tube contains each $u_j$: we only use the rule that a worker, when processed, must choose from the earliest job tube that still contains a compatible job, and the fact that assignments are irrevocable.

\smallskip
\textbf{Assignment plan.} For the within-tube worker order $\succ^W$, order those present so that among the $u$'s the relative order is
\[
u_1\succ^W u_2\succ^W\cdots\succ^W u_{i-1}\succ^W u_i\succ^W z_k\succ^W u_{i+1}\succ^W\cdots\succ^W u_{k-1},
\]

For the job orders, set for each $u_j$:
\[
\succ^J_{u_j}:\quad
\begin{cases}
a_j \succ^J_{u_j} a_{j+1}, & j<i\ ,\\[2pt]
a_{i+1} \succ^J_{u_i} a_i, & j=i\ ,\\[2pt]
a_{j+1} \succ^J_{u_j} a_{j}, & j>i\ .
\end{cases}
\]

Orders for $z_k$ and $w_\ell$ are immaterial.

\smallskip
\textbf{Execution}
Consider the procedure induced by $\varphi$ and the assignment plan above.

\emph{Claim 1.} For every $j<i$, when $u_j$ is processed (in whichever worker tube he resides), the earliest job tube containing a compatible job is $J_{p(j)}$, because $p(j)<p(j+1)$ and earlier $a$'s have not been compatible with $u_j$. Thus $u_j$ takes $a_j$, and $a_j$ is removed.

\emph{Claim 2.} When $u_i$ is processed, his earliest compatible job lies in a tube no later than $J_{p(i+1)}$ (because $p(i)\ge p(i+1)$). By $\succ^J_{u_i}$ he takes $a_{i+1}$, not $a_i$.

\emph{Claim 3.} The only workers compatible with $a_i$ are $u_{i-1}$ and $u_i$. By Claim 1, $u_{i-1}$ takes $a_{i-1}$ when processed; by Claim 2, $u_i$ takes $a_{i+1}$. Therefore $a_i$ is never assigned.

Finally, $z_k$ (whenever processed) takes $a_k$, and each $w_\ell$ takes $x_\ell$; these choices never interact with the $a$-chain.

\smallskip
\textbf{Conclusion.} The produced matching $\mu$ leaves $a_i$ unmatched, hence $|\mu|\le |J|-1=2k-2<2k-1=|\mu^\star|$. Therefore $\mu$ does not minimize unfilled jobs. Because the worker tubes were arbitrary, the failure holds for \emph{any} worker-tube partition whenever $m<k$ job tubes. This proves the proposition.
\end{proof}

Let us now revisit the market in Example \ref{example:motivatingUrnsDesign} and consider the following sequence of tubes:
\[
\varphi=\left\{\left\langle \{w_1\},\{w_2,w_3\}\right\rangle,\left\langle \{j_2,j_3\},\{j_1\}\right\rangle\right\}.
\]

This structure first considers worker $w_1$ for a match, then the workers in $\{w_2,w_3\}$. More importantly, it forces any worker being considered to first attempt a match with an available job in the first job tube, $\{j_2,j_3\}$. This architecture directly prevents the failure mode identified earlier. For instance, if $w_1$ is matched to $j_2$, worker $w_2$ must next consider the remaining job in the first tube, $j_3$, before he is allowed to consider $j_1$. Since $w_2$ is compatible with $j_3$, they are matched, leaving the compatible pair $(w_3, j_1)$ to form the final match. The arrangement makes it impossible to form the problematic ``cross-matches'' (like $(w_1, j_2)$ and $(w_2, j_1)$) that result in non-maximum outcomes.

Surprisingly, for problems involving only the rule of avoidance, this two-tube structure can be generalized to markets of any size, under a simple condition.

Let $\{r_1,\ldots,r_k\}$ be the set of regions to which workers and jobs belong, and let $W_i$ and $J_i$ denote the set of workers and jobs from region $r_i$, respectively. We use $W_{-i}$ to denote the set of all workers not in $W_i$ (i.e., $W \setminus W_i$) and, similarly, $J_{-i}$ for the set of all jobs not in $J_i$. Without loss of generality, let $W_1$ be a region with the highest number of workers (i.e., $|W_1|\geq |W_i|$ for all $i > 1$).

\begin{defn}
A market is \textbf{regular} if $|W_{-1}|\geq |J_{-1}|$.
\end{defn}

This regularity condition requires that the total number of workers from all regions except the largest one ($W_1$) is at least as large as the total number of jobs from those same regions. Now consider the following sequence of tubes:
\begin{align*}
\varphi^\ast=\left\{\left\langle W_1,W_{-1} \right\rangle,\left\langle J_{-1},J_1 \right\rangle\right\}.
\end{align*}

This sequence prioritizes workers from the region with the most workers ($W_1$) to be matched first. Critically, it also prioritizes jobs from all other regions ($J_{-1}$) to be considered first for any match.\footnote{If a region contains workers but no jobs, its workers are included in the second worker tube, $W_{-1}$. If a region contains jobs but no workers, its jobs are included in the first job tube, $J_{-1}$.}

\begin{thm}
\label{thm:TwotubesOnEachSide}
If the market is regular, the matchings produced by the sequence of tubes $\varphi^\ast$ minimize unfilled jobs for any assignment plan when considering only the rule of avoidance.
\end{thm}

\begin{proof}

In a first step, we show that all jobs in the first tube, $J_{-1}$, will be matched. Suppose that some job in  $J_{-1}$ is left unmatched. Let this unmatched job be located in region $r_t$, a region that, by the definition of $J_{-1}$, does not have the largest number of workers. That is, $j \in J_{t}$, with $t\geq 2$. It must then be that all workers who can be matched to $j$, including all workers in $W_1$ and all workers in $W_{-1}$ except $W_t$ (who are incompatible with jobs in $J_t$), are already matched to jobs in $J_{-1}$.  
This implies that, for any $t\geq 2$, the following condition must hold:
\[
\left | J_{-1}\right | > \left | W_1\right | + \left | W_{-1,t}\right |, 
\]
where $W_{-1,t}= \cup_{k \not\in \{1,t\}} W_k$. The condition says that there are more jobs in $J_{-1}$ than there are total workers excluding those from region $t$, where the unmatched job $j$ is located. Since $W_1\geq W_t$, then 
we must have: 

\[
\left | J_{-1}\right | > \left | W_t\right | + \left | W_{-1,t}\right | = \left |  W_{-1} \right|,
\]
which is a contradiction with the market being regular.

In a second step, we consider the jobs in the second tube, $J_{1}$. 
Notice that if no job in $J_{1}$ is left unmatched, then all jobs are matched to workers, therefore minimizing unfilled jobs. Suppose then that some job in $J_{1}$ is left unmatched. A first observation is that it must be the case that all workers in $W_{-1}$ (who are compatible with $J_1$) are matched. Otherwise, there would be no job left unmatched in $J_{1}$. 

There are then two cases to consider. First, there is no worker in $W_{1}$ left unmatched. If that's the case, then all workers are matched and therefore the matching minimizes unfilled jobs. 

Second, there is some worker in $W_{1}$ left unmatched. Since he is in the first tube of workers, it must be that workers in $W_{1}$ exhausted all jobs in $J_{-1}$. In that case, however, by the time the workers in $W_{-1}$ are drawn, all jobs available (those in $J_{1}$) are compatible with them. 

Here, if $\left|J_{1}\right| \leq \left|W_{-1}\right|$, we have a contradiction with some job in $J_{1}$ being left unmatched. In this case, all jobs are matched. 

If $\left|J_{1}\right|> \left| W_{-1}\right|$, then it is still possible that some job in $J_{1}$ is left unmatched. To see that the matching is still minimizing unfilled jobs in this case, we will represent the market as a bipartite graph, where $W$ are vertices on one side and each vertex represents a worker, $J$ are the vertices on the other side and each vertex represents a job, and an edge connects $w\in W$ and $j\in J$ if and only if they are compatible, i.e., they are from different regions. A matching is then a set of the edges such that each worker or job can only appear in at most one of the edges in that set. An \textbf{augmenting path} is a path with an odd number of edges in which both ends are unmatched vertices and the edges alternate between edges inside and outside the matching. 
The following lemma is useful to show our result. 

\begin{lem}[Berge's Lemma]\label{lem:berge}
A matching minimizes unfilled jobs if and only if it contains no augmenting path. 
\end{lem} 

Suppose the resulting matching does not minimize unfilled jobs. Then, by Berge's Lemma, this implies there is an augmenting path connecting an unmatched worker in $W_{1}$ and an unmatched job in $J_{1}$. An alternating path that starts at a worker in $W_{1}$, however, never includes an element of $J_{1}$. To observe this, consider an unmatched worker in $W_{1}$. That worker is connected to the jobs in $J_{-1}$. Since the next edge in the augmenting path must be in the matching, it connects next to a worker in $W_{1}$ (since all jobs in $J_{-1}$ are matched to workers in $W_{1}$). Therefore, if we keep adding edges to the augmenting path we will always alternate between vertices in $W_{1}$ and $J_{-1}$ . Therefore, there is no augmenting path connecting an unmatched worker in $W_{1}$ and an unmatched job in $J_{1}$---a contradiction with the matching not minimizing unfilled jobs.
\end{proof}

Interestingly, this extension involving the use of two ordered tubes of jobs is not only a natural extension of the original procedures, but similar to an extension that was briefly experimented while implementing the Second \Ming procedure. In it, instead of having a single tube of jobs, there were four tubes, each representing four divisions of the empire---northwest, northeast, southwest, and southeast \citep{will2002creation}. Candidates would draw first from the tube containing jobs from the division where their home province was before proceeding to the other ones. The intention was to offer a compromise between the use of the rule of avoidance (which was still in place) and an attempt to place candidates in locations that were not too far from their hometown. The four-tube system, however, stopped briefly after its introduction. One reason seemed to be that corruption became rife after the drawing procedure became more complicated \citep{shen1619wanli}. The need for considering proximity was no longer mentioned in the \Qing discussions. The procedure we propose, however, is not only an interesting extension of the Second \Qing procedure, but also a simple and transparent proposal for producing, in the present, maximum matchings under these type of constraints.

The two-by-two structure of $\varphi^\ast$ is, in fact, minimal. As the following proposition shows, sequences with only one tube on either side cannot guarantee a maximum matching, even in the well-behaved class of regular markets.

\begin{prop}
\label{prop:twotwominimal_corrected_final_v4}
In a market with at least three regions, and considering rule of avoidance, no sequence of tubes with only one worker tube or only one job tube can guarantee the minimization of unfilled jobs for all possible assignment plans, even if the market is regular.
\end{prop}

\begin{proof}
Fix the market $\langle W,J,\mathcal{C}\rangle$, where $W=\{w_1,w_2,w_3\}$, $J=\{j_1,j_2,j_3\}$, where worker $w_i$ and job $j_i$ belong to region $r_i$. Note that this market is regular. We will consider both cases, showing that for any sequence with one tube of workers or one tube of jobs, there is a market and an assignment plan that does not minimize unfilled jobs.

\medskip\noindent\textbf{(i) One worker tube.}
Let $\varphi=\{\langle W\rangle,\langle J_1,\ldots,J_m\rangle\}$ and let
\[
t^\star \in \{1,\ldots,m\}\quad\text{be the largest index with }J_{t^\star}\cap J\neq\emptyset
\]
(the last nonempty job tube in this market). Choose any $j_\ell\in J_{t^\star}$ and let $w_\ell$ be the unique worker in region $r_{\ell}$.

\emph{Assignment plan.}
Because there is a single worker tube, we may choose $\succ^W$ arbitrarily; set $w_\ell$ last in $\succ^W$.
For job orders, define for every $w\in W$ the strict order $\succ_w^J$ so that all $w$ rank each of their compatible jobs in $J_{t^\star}\setminus\{j_\ell\}$ above $j_\ell$ (so $j_\ell$ is last among $w$'s compatible jobs inside the last nonempty tube).

\emph{Execution.}
Process the first two workers in $\succ^W$ (both are $\neq w_\ell$). Each such worker $w$ identifies $a^*$ as the first tube index with $\mathcal{C}(w)\cap J_{a^*}\neq\emptyset$. If $a^*<t^\star$, he takes a job in $\bigcup_{a<t^\star}J_a$ by (a). If $a^*=t^\star$, he takes a job in $J_{t^\star}\setminus\{j_\ell\}$ by (b). In either case, neither of the first two workers takes $j_\ell$.
Hence, when the last worker $w_\ell$ is considered, the unique remaining job is $j_\ell$, which is incompatible with $w_\ell$. Therefore $\mu(w_\ell)=\emptyset$ and exactly two matches are formed, i.e.\ $|\mu|=2<3$. Since a perfect matching exists in this market, $\mu$ does not minimize unfilled jobs.

\medskip\noindent\textbf{(ii) One job tube.}
Let $\varphi=\{\langle W_1,\ldots,W_n\rangle,\langle J\rangle\}$ and let
\[
u^\star \in \{1,\ldots,n\}\quad\text{be the largest index with }W_{u^\star}\cap W\neq\emptyset
\]
(the last nonempty worker tube in this market). Choose any $w_\ell\in W_{u^\star}$ and let $j_\ell$ be the unique job in region $r_\ell$.

\emph{Assignment plan.}
Because there is a single job tube, every worker $w$ chooses from all of $J$, subject to $\succ_w^J$ and availability.  
Choose $\succ^W$ so that $w_\ell$ is last among $W_{u^\star}$.  
For job orders, define $\succ_w^J$ for the first two workers (those processed before $w_\ell$) so that each of them strictly prefers some compatible job in $J\setminus\{j_\ell\}$ to $j_\ell$. This ensures that, whenever they are considered, they will be matched to jobs in $J\setminus\{j_\ell\}$ rather than $j_\ell$. The ranking of $j_\ell$ relative to other jobs is irrelevant for them once those choices are made. The order for $w_\ell$ is immaterial since he will be left with only $j_\ell$.

\emph{Execution.}
Process the workers in the order given by $\succ^W$. The first two workers each select a job in $J\setminus\{j_\ell\}$, since these jobs are ranked above $j_\ell$ in their orders and are compatible. When the last worker $w_\ell$ is processed, the only job remaining is $j_\ell$, which is incompatible with him. Hence $\mu(w_\ell)=\emptyset$ and exactly two matches are formed, i.e.\ $|\mu|=2<3$. Since a perfect matching exists in this market, $\mu$ does not minimize unfilled jobs.

\medskip
In both cases we constructed an assignment plan \emph{as a function of the last nonempty tube on the opposite side} (jobs in (i), workers in (ii)), in accordance with the execution rule and ignoring empty tubes. Since a perfect matching exists in the market, the produced feasible matching is not of maximum cardinality.
\end{proof}

Proposition \ref{prop:twotwominimal_corrected_final_v4} demonstrates that the $\varphi^\ast$ arrangement is optimal in its simplicity; it uses the minimum number of tubes required to achieve its goal in regular markets. This result, combined with Theorem \ref{thm:TwotubesOnEachSide}, provides a valuable insight. The institutional framework of tubes and sequential drawing, developed centuries ago for the Chinese imperial civil service, contains the core components of a surprisingly simple and elegant solution to the complex problem of generating random maximum matchings. With a small but critical structural modification, the historical procedure can be transformed into a mechanism that is transparent, easily implementable, and provably optimal under a broad class of conditions.

So far, we have only considered rule of avoidance constraints. Next, we would like to also consider eligibility constraints and its associated objective of prioritizing high-levels. Before we proceed, however, it is important to note that for any market and any matching that minimizes unfilled jobs, there is a sequence of tubes that reproduces that matching.

\begin{prop}
\label{prop:trivialDeterministicMechanism}
For every market $\langle W,J,\mathcal{C}\rangle$ and every matching $\mu\in\mathcal{M}$ that minimizes unfilled jobs, there exists a sequence of tubes $\varphi$ such that, for any assignment plan $\langle \succ^W,(\succ_w^J)_{w\in W}\rangle$, the outcome of $\varphi$ under that plan is exactly $\mu$.
\end{prop}

\begin{proof}
Let $\mu$ be a feasible matching of maximum cardinality for the market $\langle W,J,\mathcal{C}\rangle$.  
Define the matched sets of agents
\[
W^+ := \{w\in W : \mu(w)\neq \emptyset\}, \quad 
J^+ := \{j\in J : \mu(j)\neq \emptyset\},
\]
and the unmatched sets $W^-:=W\setminus W^+$, $J^-:=J\setminus J^+$.  
Because $\mu$ is a matching, $|W^+|=|J^+|$ and $\mu$ induces a bijection between $W^+$ and $J^+$.  

Let $W^+=\{w_1,\dots,w_{|W^+|}\}$ be any enumeration of the matched workers, and denote $j_i:=\mu(w_i)$ for each $i$.  
We now construct the following sequence of tubes:
\[
\varphi^{\mu} \;=\; \Big\{\, 
  \langle \{w_1\},\{w_2\},\ldots,\{w_{|W^+|}\}, W^- \rangle ,\;
  \langle \{j_1\},\{j_2\},\ldots,\{j_{|J^+|}\}, J^- \rangle 
  \,\Big\}.
\]

\emph{Execution.}  
Now simulate the procedure defined by $\varphi^{\mu}$ for any assignment plan $(\succ^W,(\succ_w^J)_{w\in W})$.  
For $i=1,\dots,|W^+|$, when worker $w_i$ is processed, the first tube of jobs still containing a compatible job is $\{j_i\}$. Since $j_i\in \mathcal{C}(w_i)$, $w_i$ is matched to $j_i$ and $j_i$ is removed.  
After all $w_1,\dots,w_{|W^+|}$ are processed, the only remaining workers are those in $W^-$ and the only remaining jobs are those in $J^-$.  
By maximality of $\mu$, for every $w\in W^-$ and every $j\in J^-$ we have $j\notin \mathcal{C}(w)$. Thus all such workers are left unmatched, and so are the jobs in $J^-$.  

Therefore, the outcome of $(\varphi^{\mu},\succ^W,(\succ_w^J)_{w\in W})$ coincides exactly with $\mu$. Since $\mu$ was an arbitrary maximum-cardinality matching, the claim follows.
\end{proof}

Proposition \ref{prop:trivialDeterministicMechanism} shows that the problem of producing matchings that minimize unfilled jobs always has a trivial solution when using lots-drawing mechanisms: one can always design a ``degenerate'' sequence of tubes that deterministically implements a particular maximum matching. This, however, is clearly not what a desirable solution looks like, as such a procedure removes all randomness and symmetry across agents.

We will now return to the historical setup of regions and categories, and introduce a basic ``desirability'' property for sequences of tubes.  

\begin{defn}[Anonymous Sequence of Tubes]
A sequence of tubes 
\[
\varphi=\big\{\langle W_1,\ldots,W_k\rangle,\;\langle J_1,\ldots,J_{\ell}\rangle\big\}
\] 
is said to be \textbf{anonymous} if the following conditions hold:
\begin{enumerate}
    \item For all regions $r$ and categories $c\in\{A,B\}$, all workers in $W^c\cap W_r$ appear in the same worker tube $W_t$ for some $t\in\{1,\ldots,k\}$.
    \item For all regions $r$ and categories $c\in\{A,AB,B\}$, all jobs in $J^c\cap J_r$ appear in the same job tube $J_s$ for some $s\in\{1,\ldots,\ell\}$.
\end{enumerate}
That is, workers (resp.\ jobs) that are indistinguishable in terms of their region and category are not separated across different tubes.
\end{defn}

Anonymity is a natural property: it prevents workers or jobs that are identical from the perspective of the mathematical formulation of the market from being treated differently by the mechanism.\footnote{Note that the sequence of tubes in Theorem \ref{thm:TwotubesOnEachSide} is anonymous.}

Next, we show that the three objectives of the designer---(i) minimizing unfilled jobs, (ii) prioritizing high-levels, and (iii) anonymity---are mutually incompatible when both rule of avoidance and eligibility constraints are present.

\begin{prop}
\label{prop:incompatibilityMaximizingAndPrioritizing}
Suppose there are at least three regions, and consider both eligibility and rule of avoidance constraints. Then no anonymous sequence of tubes simultaneously minimizes unfilled jobs and prioritizes high-levels, even when markets are regular.
\end{prop}

\begin{proof}
Let $\varphi=\{\langle W_1,\ldots,W_k\rangle,\langle J_1,\ldots,J_\ell\rangle\}$ be an anonymous sequence of tubes. We consider two exhaustive cases depending on the relative ordering of tubes of $A$ jobs and $AB$ jobs.

\medskip\noindent\textbf{Case 1. Every $A$-job tube precedes every $AB$-job tube.}

Consider a regular market $\langle W,J\rangle$ with two workers 
\[
W=W^A=\{w_a,w_b\}, \quad a\neq b,
\]
where $w_a$ and $w_b$ are both $A$-workers from distinct regions $r_a,r_b$.  
Let the jobs be 
\[
J=\{j_b,j_c\}, \quad J^A=\{j_c\}, \quad J^{AB}=\{j_b\}.
\]
By anonymity, $j_c$ is placed in some $A$-job tube $J_\alpha$ and $j_b$ in some $AB$-job tube $J_\beta$, with $J_\alpha$ ordered before $J_\beta$.  

Suppose $\succ^W$ is such that $w_a \succ^W w_b$. When $w_a$ is processed, the first nonempty tube with a compatible job is $J_\alpha$, so $w_a$ is matched to $j_c$. Then $w_b$ is processed, and the only remaining job is $j_b$, which is incompatible with $w_b$ (same region). Thus $\mu(w_b)=\emptyset$ and the outcome has $|\mu|=1$.  
However, the matching $\mu'=\{(w_a,j_b),(w_b,j_c)\}$ is feasible and has size 2. Hence $\mu$ does not minimize unfilled jobs.  

Therefore, no anonymous sequence in which all $A$-job tubes precede all $AB$-job tubes can minimize unfilled jobs.

\medskip\noindent\textbf{Case 2. Some $AB$-job tube precedes an $A$-job tube.}

Let $J_\alpha$ precede $J_\beta$ in $\varphi$, with $j_a\in J_\alpha$ an $AB$ job in region $r_a$ and $j_b\in J_\beta$ an $A$ job in region $r_b$.  

Consider the regular market $\langle W,J\rangle$ with three regions $r_a,r_b,r_c$ and
\[
W=\{w_c,w_a\},\qquad J=\{j_a,j_b\},
\]

where $w_c$ is an $A$-worker from region $r_c$, and $w_a$ is a $B$-worker from region $r_a$. Worker $w_c$, therefore, is compatible with both $j_a$ ($AB$ job in region $r_a$) and $j_b$ ($A$ job in region $r_b$). Worker $w_a$, however, is incompatible with $j_a$ (same region $r_a$) and ineligible for $j_b$.  Hence the only worker with feasible options is $w_c$, and any feasible matching has cardinality at most one.  

\emph{Execution.}  
Because $J_\alpha$ (an $AB$ job tube) precedes $J_\beta$ (an $A$ job tube), $w_c$ will first encounter $j_a$ and be matched to it. The resulting matching $\mu=\{(w_c,j_a)\}$ has size one and therefore maximizes cardinality.  

However, there exists another feasible matching $\mu'=\{(w_c,j_b)\}$, also of size one, that assigns the $A$-worker $w_c$ to the $A$ job $j_b$, thereby prioritizing more high-levels. Thus, although $\mu$ maximizes cardinality, it fails to prioritize high-levels.  

We conclude that no anonymous sequence in which some $AB$-job tube precedes an $A$-job tube can satisfy both minimization of unfilled jobs and prioritization of high-levels.

\medskip
In both cases, the objectives of minimizing unfilled jobs and prioritizing high-levels are incompatible under anonymity. Since the cases exhaust all possible anonymous sequences of tubes, the claim follows.
\end{proof}

Proposition \ref{prop:incompatibilityMaximizingAndPrioritizing} therefore shows that while rule of avoidance alone admits mechanisms that are anonymous and efficient in terms of minimizing unfilled jobs, the simultaneous imposition of eligibility and high-level prioritization destroys this compatibility: no anonymous sequence of tubes can satisfy all three desiderata at once.

It is important to note, however, that Proposition \ref{prop:incompatibilityMaximizingAndPrioritizing} does not rule out the existence of meaningful solutions under alternative notions of desirability, or under stronger structural assumptions on the market. In particular, one might refine the definition of desirable procedures in ways that admit sequences which are anonymous, minimize unfilled jobs, and prioritize high-levels in restricted environments.  

One of the main challenges lies in the dimensionality of the design space. Even when considering only rule of avoidance constraints, the number of distinct anonymous sequences of tubes increases very rapidly with the number of regions: for example, there are $9$ such sequences with $2$ regions, $169$ with $3$ regions, and $5625$ with $4$ regions. This combinatorial explosion illustrates that the mechanism designer faces a highly complex search problem.  

At the same time, Proposition \ref{prop:trivialDeterministicMechanism} guarantees that, with respect to the objective of minimizing unfilled jobs, feasible solutions always exist: one can always construct a ``degenerate'' sequence of tubes that deterministically implements a maximum matching. The more substantive challenge---and the intellectually rewarding task for the designer---is to identify and characterize procedures that achieve the intended objectives while also satisfying normative desiderata such as anonymity and prioritization of high-levels, under assumptions that are both natural and practically relevant.

\section{Discussion}\label{sec:conclusion}

This paper studied the evolution of civil service assignment procedures used over more than a millennium to allocate qualified candidates to government posts in imperial China. By reconstructing these procedures from historical sources and embedding them in a unified matching framework, we were able to compare their outcomes and clarify the trade-offs underlying successive reforms. A central finding is that changes motivated by plausible administrative objectives did not necessarily improve performance once multiple constraints were taken into account. In particular, the sequential and irrevocable nature of the procedures could generate unintended consequences when eligibility and avoidance constraints interacted.

Beyond the specific historical setting, a key contribution of the paper is methodological. Formal modeling provides a way to infer the implicit objectives and concerns of institutional designers even when these are not fully articulated in the historical record. By tracing the logical implications of documented rules, the analysis links observed procedural details to the properties of the resulting allocations, helping to interpret why certain reforms were introduced, retained, or abandoned. At the same time, the framework disciplines empirical and historical analysis by making explicit which constraints and implementation choices shape assignment outcomes, a point that is particularly relevant when historical randomness is used for identification.

The long-run perspective also sheds light on institutional persistence and change. Examining which design features survived repeated reforms, and which were modified or discarded, provides information about the structural properties that remained compatible with evolving administrative and political constraints. In this sense, the historical record serves as a form of revealed design experimentation, offering guidance for evaluating alternative assignment rules beyond their immediate performance.

Finally, the analysis points to broader implications for the design of matching mechanisms that rely on \emph{public randomization}, where outcomes are determined through observable and verifiable draws of lots. Such mechanisms occupy an intermediate position between fully deterministic rules and unrestricted stochastic allocation. Our results highlight both the potential and the limits of sequential random procedures, suggesting directions for future work on constrained matching in richer environments. Similar design issues arise in contemporary settings—such as the assignment of judges to cases, students to schools, or public resources across regions—where transparency and verifiability make public randomization an attractive institutional tool.

\section*{Declaration of generative AI and AI-assisted technologies in the writing process}\label{sec:AIDeclaration}
During the preparation of this work the author(s) used ChatGPT in order to explore ideas, make minor edits and rephrase some sentences. After using this tool/service, the author(s) reviewed and edited the content as needed and take(s) full responsibility for the content of the publication.

\bibliographystyle{ecca}
\bibliography{civil}

\begin{thebibliography}{51}
\providecommand{\natexlab}[1]{#1}

\bibitem[{Abdulkadiro{\u{g}}lu and
  S{\"o}nmez(2003)}]{abdulkadirouglu2003school}
\textsc{Abdulkadiro{\u{g}}lu, A.} and \textsc{S{\"o}nmez, T.} (2003). School
  choice: A mechanism design approach. \textit{American Economic Review},
  \textbf{93}~(3), 729--747.

\bibitem[{Andersson and Ehlers(2017)}]{andersson2017assigning}
\textsc{Andersson, T.} and \textsc{Ehlers, L.} (2017). Assigning refugees to
  landlords in sweden: Efficient stable maximum matchings. \textit{The
  Scandinavian Journal of Economics}.

\bibitem[{Arnosti and Shi(2020)}]{arnosti2020design}
\textsc{Arnosti, N.} and \textsc{Shi, P.} (2020). Design of lotteries and
  wait-lists for affordable housing allocation. \textit{Management Science},
  \textbf{66}~(6), 2291--2307.

\bibitem[{Berger \textit{et~al.}(2020)Berger, Osterloh, Rost and
  Ehrmann}]{berger2020prevent}
\textsc{Berger, J.}, \textsc{Osterloh, M.}, \textsc{Rost, K.} and
  \textsc{Ehrmann, T.} (2020). How to prevent leadership hubris? comparing
  competitive selections, lotteries, and their combination. \textit{The
  Leadership Quarterly}, \textbf{31}~(5), 101388.

\bibitem[{Boczo{\'n} and Wilson(2022)}]{boczon2018goals}
\textsc{Boczo{\'n}, M.} and \textsc{Wilson, A.~J.} (2022). Goals, constraints,
  and transparently fair assignments: A field study of randomization design in
  the uefa champions league. \textit{Management Science}, \textbf{69}~(6),
  3474--3491.

\bibitem[{Boerner and Quint(2023)}]{boerner2022medieval}
\textsc{Boerner, L.} and \textsc{Quint, D.} (2023). Medieval matching markets.
  \textit{International Economic Review}, \textbf{64}~(1), 23--56.

\bibitem[{Bogomolnaia and Moulin(2004)}]{bogomolnaia2004random}
\textsc{Bogomolnaia, A.} and \textsc{Moulin, H.} (2004). Random matching under
  dichotomous preferences. \textit{Econometrica}, \textbf{72}~(1), 257--279.

\bibitem[{B{\"o}rner and Hatfield(2017)}]{borner2017design}
\textsc{B{\"o}rner, L.} and \textsc{Hatfield, J.~W.} (2017). The design of
  debt-clearing markets: Clearinghouse mechanisms in preindustrial europe.
  \textit{Journal of Political Economy}, \textbf{125}~(6), 1991--2037.

\bibitem[{Buchstein and Hein(2009)}]{buchstein2009randomizing}
\textsc{Buchstein, H.} and \textsc{Hein, M.} (2009). Randomizing europe: the
  lottery as a decision-making procedure for policy creation in the eu.
  \textit{Critical Policy Studies}, \textbf{3}~(1), 29--57.

\bibitem[{Colonnelli \textit{et~al.}(2020)Colonnelli, Prem and
  Teso}]{colonnelli2020patronage}
\textsc{Colonnelli, E.}, \textsc{Prem, M.} and \textsc{Teso, E.} (2020).
  Patronage and selection in public sector organizations. \textit{American
  Economic Review}, \textbf{110}~(10), 3071--99.

\bibitem[{Creel(1964)}]{creel1964beginnings}
\textsc{Creel, H.~G.} (1964). The beginnings of bureaucracy in {China}: The
  origin of the {Hsien}. \textit{The Journal of Asian Studies},
  \textbf{23}~(2), 155--184.

\bibitem[{Deng(2004)}]{deng2004politics}
\textsc{Deng, X.} (2004). ``zuzong zhifa yu guanliao zhengzhi zhidu: Song (the
  ancestors' family instructions and the bureaucratic political system:
  Song).'' in zhongguo gudai guanliao zhengzhi yanjiu (a study on the
  historical chinese political systems) edited by zongguo wu.

\bibitem[{Dimakopoulos and Heller(2019)}]{dimakopoulos2019matching}
\textsc{Dimakopoulos, P.~D.} and \textsc{Heller, C.-P.} (2019). Matching with
  waiting times: The german entry-level labor market for lawyers. \textit{Games
  and Economic Behavior}, \textbf{115}, 289--313.

\bibitem[{DQHD(1886)}]{daqingGuangxu}
\textsc{DQHD} (1886). \textit{Da qing huidian (The Collected Statutes of
  Qing)}, vol.~44.

\bibitem[{Dur \textit{et~al.}(2018)Dur, Kominers, Pathak and
  S{\"o}nmez}]{dur2018reserve}
\textsc{Dur, U.}, \textsc{Kominers, S.~D.}, \textsc{Pathak, P.~A.} and
  \textsc{S{\"o}nmez, T.} (2018). Reserve design: Unintended consequences and
  the demise of boston’s walk zones. \textit{Journal of Political Economy},
  \textbf{126}~(6), 2457--2479.

\bibitem[{Elman(2000)}]{elman2000cultural}
\textsc{Elman, B.~A.} (2000). \textit{A cultural history of civil examinations
  in late imperial China}. University of California Press.

\bibitem[{Feng(2014)}]{feng2014online}
\textsc{Feng, X.} (2014). Online bipartite matching: a survey and a new
  problem. \textit{mimeo}.

\bibitem[{Gong(1997)}]{gong1997dictionary}
\textsc{Gong, Y.} (1997). \textit{Songdai guanzhi cidian (A dictionary of the
  Song civil service system)}. Zhonghua book company.

\bibitem[{Greif(1993)}]{greif1993contract}
\textsc{Greif, A.} (1993). Contract enforceability and economic institutions in
  early trade: The maghribi traders' coalition. \textit{American Economic
  Review}, pp. 525--548.

\bibitem[{Greif \textit{et~al.}(1994)Greif, Milgrom and
  Weingast}]{greif1994coordination}
\textsc{---}, \textsc{Milgrom, P.} and \textsc{Weingast, B.~R.} (1994).
  Coordination, commitment, and enforcement: The case of the merchant guild.
  \textit{Journal of Political Economy}, \textbf{102}~(4), 745--776.

\bibitem[{Gu(1670)}]{Guyanwu1670}
\textsc{Gu, Y.} (1670). \textit{Ri zhi lu (Records of daily knowledge)}.

\bibitem[{Headlam(1891)}]{headlam1891election}
\textsc{Headlam, J.~W.} (1891). \textit{Election by lot at Athens}. 4, The
  University Press.

\bibitem[{Huang(1694)}]{huang1997fuhui}
\textsc{Huang, L.} (1694). \textit{Fuhui quanshu (A Complete Book Concerning
  Happiness and Benevolence)}. reprint of Japanese 1850 edition, Kyuko shoin,
  Tokyo.

\bibitem[{Kuhn(1955)}]{kuhn1955hungarian}
\textsc{Kuhn, H.~W.} (1955). The hungarian method for the assignment problem.
  \textit{Naval Research Logistics Quarterly}, \textbf{2}~(1-2), 83--97.

\bibitem[{LBZZ(1614)}]{zhizhang}
\textsc{LBZZ} (1614). \textit{Libu zhizhang (Regulations of Ministry of
  Personnel)}.

\bibitem[{Mackenzie(2020)}]{mackenzie2019axiomatic}
\textsc{Mackenzie, A.} (2020). An axiomatic analysis of the papal conclave.
  \textit{Economic Theory}, \textbf{69}, 713--743.

\bibitem[{MHD(1587)}]{Minghuidian}
\textsc{MHD} (1587). \textit{Ming huidian (Collected Statutes of Ming)}, vol.
  2-13.

\bibitem[{MS(1739)}]{Mingshi}
\textsc{MS} (1739). \textit{Ming shi (History of Ming)}, vol.~71.

\bibitem[{Pan(2005)}]{pan2005ming}
\textsc{Pan, X.} (2005). \textit{Mingdai wenguan quanxuan zhidu yanjiu (A study
  on the appointments of civil service system in Ming dynasty)}. Peking
  University Press.

\bibitem[{Pathak \textit{et~al.}(2025)Pathak, Rees-Jones and
  Sönmez}]{pathak2020immigration}
\textsc{Pathak, P.~A.}, \textsc{Rees-Jones, A.} and \textsc{Sönmez, T.}
  (2025). Immigration {Lottery} {Design}: {Engineered} and {Coincidental}
  {Consequences} of {H}-{1B} {Reforms}. \textit{The Review of Economics and
  Statistics}, \textbf{107}~(1), 1--13.

\bibitem[{Pathak \textit{et~al.}(2024)Pathak, Sönmez, Ünver and
  Yenmez}]{pathak2020fair}
\textsc{---}, \textsc{Sönmez, T.}, \textsc{Ünver, M.~U.} and \textsc{Yenmez,
  M.~B.} (2024). Fair {Allocation} of {Vaccines}, {Ventilators} and {Antiviral}
  {Treatments}: {Leaving} {No} {Ethical} {Value} {Behind} in {Healthcare}
  {Rationing}. \textit{Management Science}, \textbf{70}~(6), 3999--4036,
  publisher: INFORMS.

\bibitem[{QDLBQXZL(1886)}]{zeli1886}
\textsc{QDLBQXZL} (1886). \textit{Qin ding li bu quan xuan ze li (Rules and
  Regulations for the Appointments by the Ministry of Civil Appointments)},
  vol.~43.

\bibitem[{Roth(1984)}]{roth1984evolution}
\textsc{Roth, A.~E.} (1984). The evolution of the labor market for medical
  interns and residents: a case study in game theory. \textit{Journal of
  Political Economy}, \textbf{92}~(6), 991--1016.

\bibitem[{Roth \textit{et~al.}(2004)Roth, S{\"o}nmez and
  {\"U}nver}]{roth2004kidney}
\textsc{---}, \textsc{S{\"o}nmez, T.} and \textsc{{\"U}nver, M.~U.} (2004).
  Kidney exchange. \textit{The Quarterly Journal of Economics},
  \textbf{119}~(2), 457--488.

\bibitem[{Shen(1619)}]{shen1619wanli}
\textsc{Shen, D.} (1619). \textit{Wanli yehuo bian (Unofficial matters from the
  Wanli reign)}.

\bibitem[{S{\"o}nmez(2013)}]{sonmez2013bidding}
\textsc{S{\"o}nmez, T.} (2013). Bidding for army career specialties: Improving
  the rotc branching mechanism. \textit{Journal of Political Economy},
  \textbf{121}~(1), 186--219.

\bibitem[{S{\"o}nmez and Switzer(2013)}]{sonmez2013matching}
\textsc{S{\"o}nmez, T.} and \textsc{Switzer, T.~B.} (2013). Matching with
  (branch-of-choice) contracts at the united states military academy.
  \textit{Econometrica}, \textbf{81}~(2), 451--488.

\bibitem[{S{\"o}nmez and Yenmez(2022)}]{sonmez2021affirmative}
\textsc{---} and \textsc{Yenmez, M.~B.} (2022). Affirmative action in india via
  vertical, horizontal, and overlapping reservations. \textit{Econometrica},
  \textbf{90}~(3), 1143--1176.

\bibitem[{SS(1343)}]{Song1343}
\textsc{SS} (1343). \textit{Song shi (History of Song)}, vol. 158.

\bibitem[{Sun(1777)}]{sun1777chunming}
\textsc{Sun, C.} (1777). \textit{Chunming meng yulu (A Collective Memories of
  Ming)}. Si ku quan shu (Complete Library in Four Treasures), Vol. 24.

\bibitem[{Thakur(2020)}]{thakur2020indian}
\textsc{Thakur, A.} (2020). Matching in the civil service: A market design
  approach to public administration and development. \textit{Working paper}.

\bibitem[{Thorley(2020)}]{thorley2020randomness}
\textsc{Thorley, D.} (2020). Randomness pre-considered: Recognizing and
  accounting for “de-randomizing” events when utilizing random judicial
  assignment. \textit{Journal of Empirical Legal Studies}, \textbf{17}~(2),
  342--382.

\bibitem[{Wang(2016)}]{Wang2016}
\textsc{Wang, Z.} (2016). \textit{Qingdai zhiguan renshi yanjiu (A study on the
  government personnel in Qing dynasty)}. Shanghai Bookstore Publishing House.

\bibitem[{Watt(1972)}]{watt1972district}
\textsc{Watt, J.~R.} (1972). \textit{The district magistrate in late imperial
  China}. Columbia University Press.

\bibitem[{Wei(1992)}]{Wei1992}
\textsc{Wei, X.} (1992). Qingdai renguan zhi jiguan huibi zhidu (avoidance of
  home region in qing appointments). \textit{Institute of Modern History
  working papers, Academia Sinica}, \textbf{3}.

\bibitem[{Will(2002)}]{will2002creation}
\textsc{Will, P.-{\'E}.} (2002). Creation, conflict, and routinization: The
  appointment of officials by drawing lots, 1594--1700. \textit{Ming Qing
  Yanjiu}, pp. 73--121.

\bibitem[{Wu(1609)}]{Wu1609}
\textsc{Wu, L.} (1609). \textit{Wanli shuchao (A collection of the Wanli coourt
  debates)}, vol.~21.

\bibitem[{Wu(2013)}]{Wu2013Juanna}
\textsc{Wu, Y.} (2013). \textit{Zhongguo de juannazhidu yu shehui (The title
  purchase institution and society in China)}. Jiangsu People Press.

\bibitem[{Xu(2018)}]{xu2017costs}
\textsc{Xu, G.} (2018). The costs of patronage: Evidence from the {British
  Empire}. \textit{American Economic Review}, \textbf{108}~(11), 3170--98.

\bibitem[{Yao(1408)}]{Yongle}
\textsc{Yao, G.} (1408). \textit{Yongle da dian (Yongle Ecyclopedia)}, vol.
  14626.

\bibitem[{Zhang(2010)}]{zhang2010qing}
\textsc{Zhang, Z.} (2010). \textit{Qingdai wenguan xuanren zhidu yanjiu (A
  study on the appointments of civil service system in Qing dynast)}. Ph.D.
  thesis, Nankai University.

\end{thebibliography}
\end{document}